\theoremstyle{plain}
\newtheorem{theorem}{Theorem}
\newtheorem{proposition}[theorem]{Proposition}
\newtheorem{conjecture}[theorem]{Conjecture}
\newcommand{\Mod}[1]{\ (\mathrm{mod}\ #1)}
\definecolor{crimson}{rgb}{.8, 0, 0}
\begin{document}

\title{Preparing low-variance states using a distributed quantum algorithm}
\author{Xiaoyu~Liu}
\email{xiaoyu@lorentz.leidenuniv.nl}
\affiliation{$\langle aQa ^L\rangle $ Applied Quantum Algorithms, Universiteit Leiden}
\affiliation{Instituut-Lorentz, Universiteit Leiden, P.O. Box 9506, 2300 RA Leiden, The Netherlands}
\author{Benjamin~F.~Schiffer}
\email{benjamin.schiffer@mpq.mpg.de}
\affiliation{Max-Planck-Institut f\"ur Quantenoptik, Hans-Kopfermann-Str.~1, D-85748~Garching, Germany}
\author{Jordi~Tura}
\affiliation{$\langle aQa ^L\rangle $ Applied Quantum Algorithms, Universiteit Leiden}
\affiliation{Instituut-Lorentz, Universiteit Leiden, P.O. Box 9506, 2300 RA Leiden, The Netherlands}

\begin{abstract}
Quantum computers are a highly promising tool for efficiently simulating quantum many-body systems. 
The preparation of their eigenstates is of particular interest and can be addressed, e.g., by quantum phase estimation algorithms. The routine then acts as an effective filtering operation, reducing the energy variance of the initial state.
In this work, we present a distributed quantum algorithm inspired by iterative phase estimation to prepare low-variance states.
Our method uses a single auxiliary qubit per quantum device, which controls its dynamics, and a postselection strategy for a joint quantum measurement on such auxiliary qubits. 
In the multi-device case, the result of this measurement heralds the successful runs of the protocol.
This allows us to demonstrate that our distributed algorithm reduces the energy variance faster compared to single-device implementations, thereby highlighting the potential of distributed algorithms for near-term and early fault-tolerant devices. 
\end{abstract}

\section{Introduction}

Preparing eigenstates of Hamiltonians plays an essential role in quantum simulation to investigate the properties of quantum many-body systems.
Quantum computers are naturally well-suited for this task, with quantum phase estimation being the canonical method for eigenstate preparation~\cite{nielsen2010quantum,kitaev1995quantum,obrien2019quantum,dutkiewicz2022heisenberglimited}. Obtaining precise eigenstates then allows one to accurately compute their energies and measure other observables that provide insights into the behavior of complex many-body systems.

\begin{figure*}[t]
    \centering
    \includegraphics[width=1.0\linewidth]{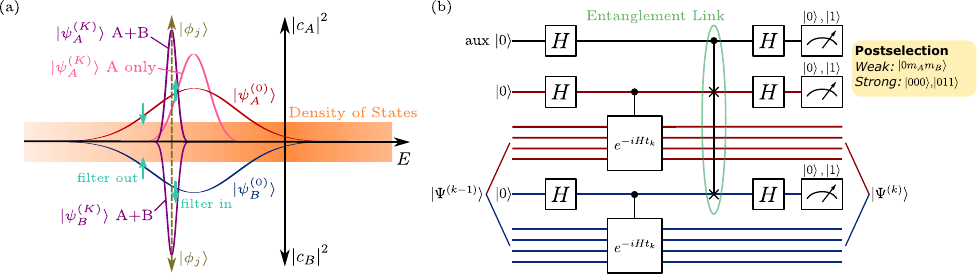}
    \caption{(a) Illustration of a filtering operation. Consider a local Hamiltonian $H$, which has an approximate Gaussian-shaped density of states centered at $\Tr(H)$~\cite{hartmann2005spectral}. The devices Alice ($A$) and Bob ($B$) start by preparing an initial state, with respective populations $\left|c_A\right|^2$ and $\left|c_B\right|^2$. 
    After multiple iterations of the protocol, the state becomes increasingly concentrated. 
    The distributed version of this algorithm uses an entanglement link between $A$ and $B$ to enhance the filtering process, resulting in an output state with a lower energy variance on average for the same number of iterations. 
    Note that an energy bias may be introduced during the postselection routine.
    (b) Circuit for the distributed filtering algorithm. Both $A$ and $B$ are initialized in the same state $|\Psi^{(0)}\rangle=|\psi^{(0)}\rangle^{\otimes 2}$. In each iteration $k$, a random time evolution for a sufficiently large $t_k$ is applied to both controlled unitaries, followed by a control-SWAP gate, realizing the entanglement link. We analyze two cases (\emph{weak} and \emph{strong}) for postselecting the measurement results of the auxiliary qubits.}
    \label{intro}
\end{figure*}

Quantum phase estimation (QPE)~\cite{nielsen2010quantum,abrams1999quantum} and filtering~\cite{poulin2009preparing, ge2019faster} are closely related methods to prepare low-variance states on quantum computers. For a Hamiltonian of interest $H$, the algorithm relies on conditional dynamics $e^{-iHt}$ that realizes phase kick-back to the controlling auxiliary qubit.
Another interesting approach is to prepare the filtered state virtually by measuring Loschmidt echoes~\cite{lu2021algorithms}, or adiabatically by a parent Hamiltonian construction~\cite{irmejs2024efficient}.
All approaches have in common that the system is initialized in a state that can be efficiently prepared. Typically, this initial state is a product state that will generally exhibit a Gaussian-like distribution in the eigenbasis of a local Hamiltonian~\cite{rai2024matrix}. 
A filter algorithm acts on the initial state by sharpening the energy distribution towards a delta function, as shown in Fig.~\ref{intro}(a). 
In the limit of an arbitrarily precise filter centered at an eigenenergy of the Hamiltonian, the operation prepares the microcanonical ensemble by isolating the eigenstate(s) at that energy. 

A particularly simple quantum routine for filtering is inspired by iterative quantum phase estimation (IQPE)~\cite{abrams1999quantum, dobsicek2007arbitrary, xu2014demonlike, meister2022resourcefrugal, chen2020quantum, schiffer2025quantum, qian2024demonstration, choi2021rodeo}. This algorithm applies a Hadamard-test circuit~\cite{aharonov2006polynomial} for each iteration by implementing a time evolution operator controlled on a single auxiliary qubit.
The auxiliary qubit can then be reused for the next iteration.
Conversely, textbook QPE relies on a sequence of conditional dynamics (controlled unitary evolutions) with evolution times chosen such that one may obtain one additional digit of the phase with each auxiliary qubit. 
This precision requirement can be relaxed by choosing the evolution times randomly from a sufficiently large time interval to shuffle the eigenphases, as we explain further below.
For this protocol, it is known that the energy variance of the output state decreases with each iteration on average, generically converging to an eigenstate. However, reaching low-variance states may still require many iterations, demanding qubits with long coherence times.

A promising strategy to reduce the number of rounds --- and consequently, the circuit depth --- in this IQPE-based protocol is to leverage distributed quantum computing. Distributed quantum algorithms can address large-scale problems with shallower circuits by utilizing interconnected quantum devices that can share entanglement~\cite{magnard2020microwave,harvey-collard2022coherent,cirac1999distributed,buhrman2003distributed}. 
In fact, such a distributed approach has been applied to eigenstate preparation, as demonstrated in~\cite{schiffer2025quantum}. In that work, an eigenstate broadcasting scenario was investigated with two quantum devices: a first device Alice, with an almost perfect eigenstate, and a second device Bob, with a rougher approximation to the target ground state. The distributed circuit allows Bob to speed up his preparation via implicit knowledge of the target state through a shared entanglement link with Alice. 

In this work, we analyze a similar setup, focusing on its application to quantum state filtering:
both Alice and Bob start with the same product state and iteratively converge to states with a lower energy variance.
We first propose a two-device distributed filtering algorithm where joint quantum measurements between the devices enable Alice and Bob to control the moments of the energy distribution of the initial state. We focus on the first and second moments, i.e.~the energy and variance of the state. 
Concretely, we show that the distributed filtering algorithm reaches a low-variance state more rapidly on average than the single-device case.
Postselection on the measurement outcome is an essential ingredient in the protocol and careful engineering of the postselection allows one to further optimize the state preparation process. 
The postselection introduces either a positive or negative energy bias. 
We provide a bound on the maximum bias introduced in the protocol.
Finally, we generalize our algorithm to a larger number of connected quantum devices that jointly perform the filtering operation. Numerical evidence shows an enhanced eigenstate preparation as the number of devices is increased.  

This paper is structured as follows: in Section~\ref{sec2}, we introduce the filtering algorithm and provide a detailed explanation of the distributed algorithm, including its pseudocode. We analyze the performance of the two-device algorithm through numerical experiments on several product state instances, focusing on the behavior of the average variance and energy over iterations. The extension of the distributed architecture to a setup with more than two devices is found in Section~\ref{sec3}, including numerical experiments benchmarking the performance of two and three devices against a single device. Finally, in Section~\ref{sec4}, we summarize and discuss our findings.

\begin{figure*}[t]
    \centering
    \includegraphics[width=1.0\linewidth]{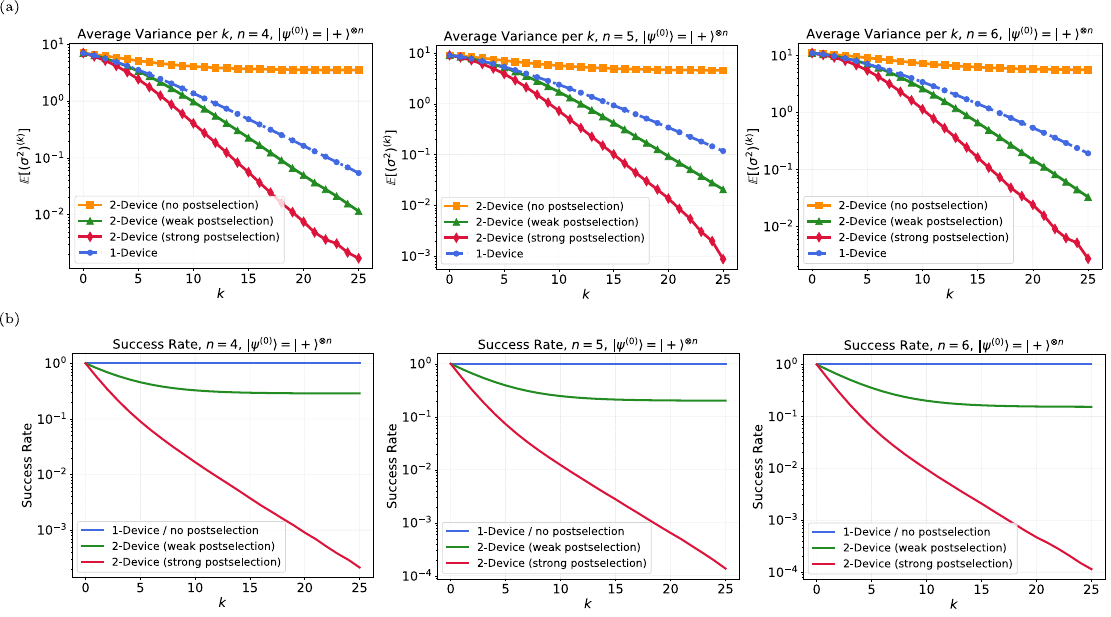}
    \caption{(a) Average variance of the pure state ensemble per iteration $k$ up to 25. Three cases ($n=4,5,6$, $\ket{\psi^{(0)}}=\ket{+}^{\otimes n}$) are shown. For cases requiring postselection, only states meeting the postselection criteria are retained at each iteration $k$. The cumulative success rate after postselection at each iteration are shown in (b). Note that the single-device algorithm does not require postselection, thus resulting in a 100\% success rate. We use $10^8$ trials from the beginning for \emph{strong} postselection cases and $10^6$ for the rest to acquire accurate enough expectation values. More numerical results starting with a fixed number of repetitions of the algorithm with complete error bars are shown in the Appendix~\ref{appx-complete}.}
    \label{var_psi}
\end{figure*}

\section{Distributed filtering with auxiliary qubits\label{sec2}}

The single-device setup was previously introduced under different names in~\cite{abrams1999quantum, xu2014demonlike, chen2020quantum, meister2022resourcefrugal, schiffer2025quantum}, where Hadamard tests~\cite{aharonov2006polynomial} are applied iteratively using the unitary evolution operator $U=e^{-iHt_k}$ for each iteration $k$. In each iteration, this unitary is controlled by a single local auxiliary qubit, which is initialized in $\ket{+}$ and measured in the $\ket{\pm}$ basis. The details of the single-device algorithm are restated in Appendix~\ref{appx-single}. 

To extend this setup to two devices, we use the configuration shown in Fig.~\ref{intro}(b). Here, two identical Hadamard test circuits are executed in parallel on both devices Alice and Bob. The two devices are connected through their respective control qubits to a single additional helper qubit ``aux''. This  allows quantum communication between the devices using a circuit that is effectively a swap test circuit on the two control qubits~\cite{barenco1997stabilization,buhrman2001quantum}, where the control-SWAP gate acts as the entanglement link, acting non-locally only on the spatially separated auxiliary qubits. For simplicity, we assume that Alice possesses the top auxiliary qubit (``aux'' in Fig.~\ref{intro}). Then, for each iteration, only one Bell pair generated between Alice and Bob is required to teleport Bob's single local auxiliary qubit and the control-SWAP gate can be applied locally. The three qubits are finally measured in the $\ket{\pm}$-basis. We detail the realization of the distributed circuit with shared Bell pairs between the two devices in Appendix \ref{appendix:cost}.

In the following sections, we present a detailed description of the distributed filtering algorithm and demonstrate its ability to control the moments of quantum states. 
We focus our analysis on the first and second moment --- energy $E$ and variance $\sigma^2$ --- defined as
\begin{align} \label{E_def_var_def}
    E &= \bra{\psi}H\ket{\psi}, \\
    \sigma^2 &= \bra{\psi}(H-E)^2\ket{\psi} = \bra{\psi}H^2\ket{\psi} - E^2,
\end{align}
for a pure state $\ket{\psi}$.
Note that for eigenstates of $H$ we have $\sigma^2=0$.

\subsection{Algorithm outline}

Recall that the objective of the filtering algorithm is to prepare a state with lower variance in the eigenbasis of a given Hamiltonian $H$ than the initial (product) state. 
In the distributed setup, both Alice and Bob start with identical product states: $\ket{\psi_A}\otimes\ket{\psi_B}=\ket{\psi^{(0)}}^{\otimes2}$. 
They each input their state locally into the circuit shown in Fig.~\ref{intro}(b) and run the circuit iteratively. 
At each iteration, two identical controlled unitaries $e^{-iHt_k}$ are applied locally. 
The time $t_k$ for each iteration is randomly chosen from $(0,T]$, such that $\varphi_j^{(k)} = -t_k \lambda_j \Mod{2\pi} \sim \text{Uniform}(0, 2\pi)$ for every $j$, to effectively separate the eigenstates. Here $\lambda_j$ is the $j$th eigenvalue of the Hamiltonian $H$.
This randomization is needed because, without prior knowledge of the initial product state's populations in the eigenbasis, each iteration requires a sufficiently large time evolution $t_k$ to effectively separate the eigenstates from each other by randomizing the eigenphases $\varphi^{(k)}_j=-t_k\lambda_j$. 
Note that both Alice and Bob use the same $t_k$ in each iteration to ensure that the output state remains symmetric across Alice and Bob. 
This condition is necessary for both parties to converge to the same eigenstate, heralded by postselection. 
This synchronization can be efficiently achieved by pre-sharing a list of random $t_k$ values generated from a common random seed.
Further details on this randomization of the phases are provided in Appendix~\ref{appx-single-var}.

Additionally, the distributed filtering algorithm depends on postselection of the measured auxiliary qubits, which can be performed using either a \emph{weak} postselection criterion (where we continue to the next iteration only when top auxiliary qubit outcome is $\ket{0}$), or a \emph{strong} criterion (where we continue to the next iteration only when the outcome of the three auxiliary qubits is $\ket{000}$ or $\ket{011}$).
These two types of postselection exhibit different properties and performance, which we explore later. 
If any iteration yields measurement outcomes that fall outside the postselection criteria, the protocol is restarted, as such outcomes on average broaden the state distribution, preventing the operation from effectively working as a filter.
The pseudocode for the distributed filtering algorithm is shown in Algorithm~\ref{algo1}.

For the \emph{strong} postselection, $\ket{\Psi_{AB}^{(K)}}$ will have a tensor product structure across Alice and Bob. If we start with two copies of the same state, both Alice and Bob will hold the same final state. On the other hand, for the \emph{weak} postselection, the state will be entangled in general, but for large values of $K$, it will approximate a product of identical states.
This is because both postselection criteria select the top qubit to be measured in the $\ket{0}$ state, the state registers of both Alice and Bob are projected to their joint symmetric subspace in every iteration.
As a result, tracing out either Alice or Bob's systems will result in the same states on both devices. 
If the minimal distance between eigenvalues of the Hamiltonian $H$ is $\Omega(1/t_k)$, then the final states will be eigenstates of $H$.
Our distributed protocol therefore allows one to prepare multiple identical copies of (approximate) eigenstates in a heralded way, and we include the details in Appendix~\ref{appendix:energy-variance}.

\subsection{Faster reduction of the variance}

We now analyze the performance of the distributed filtering algorithm. The variance of the pure state is defined in Eq.~\eqref{E_def_var_def}. Due to the inherent randomness of the algorithm (random $t_k$ and the measurement process), the output state after a fixed number of iterations can differ. 
Consequently, we compute the average variance of the pure state ensemble at iteration $k$, denoted $\mathbb{E}\left[ (\sigma^{2})^{(k)} \right]$, over a sufficiently large number of repeated runs. 
Analytically, we write
\begin{align} \label{eq:variance}
    &\mathbb{E}\left[ (\sigma^{2})^{(k)} \right] \nonumber\\
    =&\mathbb{E}\left[\bra{\psi^{(k)}}H^2\ket{\psi^{(k)}}\right] - \mathbb{E}\left[\left( \bra{\psi^{(k)}}H\ket{\psi^{(k)}} \right)^2\right] \nonumber\\
    =&\Tr\left( \rho^{(k)}H^2 \right) - \mathbb{E}\left[\left( \bra{\psi^{(k)}}H\ket{\psi^{(k)}} \right)^2\right],
\end{align}
where $\ket{\psi^{(k)}}$ denotes the output state at iteration $k$, and $\rho^{(k)}$ represents the density matrix describing a mixture of states $\{\ket{\psi^{(k)}}\}$ with the corresponding probabilities. 

\begin{algorithm}[H]
\label{algo-two}
    \caption{Two-device distributed filtering algorithm}
    \begin{algorithmic}[1]
        \REQUIRE Two product states $\ket{\psi}\otimes\ket{\psi}=\ket{\psi^{(0)}}^{\otimes 2}$; target iteration $K$; postselection type $\mathcal{P}$; Hamiltonian $H$ with a generic spectrum
        \ENSURE State $\ket{\Psi^{(K)}_{AB}}$.
        \STATE Initialize: $\ket{\Psi_{AB}^{(0)}} \gets \ket{\psi^{(0)}}^{\otimes 2}$;
        \FOR{$k = 1:K$}
            \STATE Generate random time $t_k$ such that $\varphi_j^{(k)} = -t_k \lambda_j \mod 2\pi \sim \text{Uniform}(0, 2\pi)$ for every $j$;
            \STATE Input $\ket{\Psi_{AB}^{(k-1)}}$ and $t_k$ to the circuit in Fig.~\ref{intro}(b);
            \STATE Measure and get the result $\ket{m_k} = \ket{m^{(k)}_0 m^{(k)}_A m^{(k)}_B}$;
            \IF{$\mathcal{P} = \text{\emph{weak}} \textbf{ and } m^{(k)}_0 = 0$}
                \STATE Output $\ket{\Psi_{AB}^{(k)}}$ and \textbf{continue};
            \ELSIF{$\mathcal{P} = \text{\emph{strong}} \textbf{ and } (m_k = 000 \textbf{ or } m_k = 011)$}
                \STATE Output $\ket{\Psi_{AB}^{(k)}}$ and \textbf{continue};
            \ELSE
                \STATE \textbf{break} and start over from initialization;
            \ENDIF
        \ENDFOR
        \RETURN $\ket{\Psi_{AB}^{(K)}}$.
    \end{algorithmic}
    \label{algo1}
\end{algorithm}

We numerically compute the variance of the pure state ensemble and consider the $n$-qubit initial input state $\ket{\psi^{(0)}}=\sum_{j}c_j^{(0)}\ket{\phi_j}$, where the quantities $\ket{\phi_j}$ and $c^{(0)}_j$ denote the $j$th eigenstate of the Hamiltonian $H$ and its corresponding amplitude, respectively. 
Throughout this paper, we shall use the non-integrable, one-dimensional Ising model with both a longitudinal and a transverse field as a running example:
\begin{equation}
    H = \sum_{j=1}^{n-1}\sigma_j^z \sigma_{j+1}^z + \sum_{j=1}^n (\sigma_j^x+\sigma_j^z).
\end{equation}
In Fig.~\ref{var_psi} we present numerical results for $n=4,5,6$ and $\ket{\psi^{(0)}}=\ket{+}^{\otimes n}$. 
We compare the performance of the distributed filter with the single-device filter. 
For scenarios requiring postselection, states that do not meet the postselection criterion are discarded. 
The results show that the distributed filter with postselection achieves a lower average energy variance, outperforming the single-device filter.
In contrast, the distributed filter without postselection, where the algorithm continues regardless of measurement outcomes, performs worse than the single-device case. The \emph{strong} criterion for postselection yields the fastest decrease in the average energy variance.

Nevertheless, the distributed filter incurs an overhead due to postselection. Fig.~\ref{var_psi}(b) illustrates how this overhead scales with the number of iterations $k$ for both \emph{weak} and \emph{strong} postselection types. 
We calculate the cumulative success rate after each postselection, i.e., the average probability of the auxiliary qubit(s) result meeting the corresponding postselection criterion at each $k$. 
Notably, in the case of \emph{weak} postselection, the cumulative success rate tends to stabilize after a few iterations. This implies that once the protocol succeeds in several consecutive rounds, it is highly likely to continue succeeding.
Specifically, the average cumulative success rate of the \emph{weak} postselection is lower bounded:
\begin{align}
    P_{\mathrm{w}}&=\sum_{j}|c_j^{(0)}|^4+\left(\frac{3}{4}\right)^k\sum_{j\neq j'}|c_j^{(0)}|^2 |c^{(0)}_{j'}|^2 \nonumber\\
    &\geq\sum_{j}|c_j^{(0)}|^4.
\end{align}
Although this lower bound depends only on the the eigenbasis amplitudes of $\ket{\psi^{(0)}}$ which in principle can be any state, for $n$-qubit product state this lower bound indeed has a scaling of the form $A\exp(B/n)$ with $A$ and $B$ coefficients independent of $n$, when the Hamiltonian $H$ is local.
This lower bound decreases mildly with increasing $n$, which supports the practicality of the \emph{weak} postselection protocol for larger system sizes.
For \emph{strong} postselection, however, this quantity decreases exponentially as:
\begin{equation}
    P_{\mathrm{s}}=\left( \frac{3}{4} \right)^k\sum_{j}|c_j^{(0)}|^4 + \left( \frac{1}{2} \right)^k \sum_{j \neq j'}|c_j^{(0)}|^2|c^{(0)}_{j'}|^2.
\end{equation}
Both derivations are included in Appendix~\ref{appendix:energy-variance}.
Due to the overhead introduced by postselection, additional resource costs will be incurred, such as an increased number of time evolutions and greater consumption of Bell pairs.
These costs can be evaluated from the cumulative success rate, and for \emph{weak} postselection they scale linearly with $k$ when the cumulative success rate plateaus, but the ones for \emph{strong} postselection scale exponentially.
Thus, there is a trade-off between the two postselection types: \emph{strong} postselection reduces the energy variance more quickly than \emph{weak} postselection but suffers from an exponential postselection overhead.

\subsection{Spreading of eigenstates}

In the previous section, we considered the average energy variance of the pure state prepared in the algorithms, which asymptotically goes to zero in the limit of many iterations.
The probability with which we converge to a specific eigenstate, though, is related to the initial populations $\{|c^{(0)}_j|^2\}$. This is another variance and can be interpreted as the variance of the mixed state $ \rho^{(k)}$ that describes the expected physical system after $k$ iterations. To avoid confusion between the two types of variances, we denote it as the \emph{eigenstate spread}. Effectively, it is a measure over what energy range the eigenstates prepared in the algorithm are distributed. 
We can also see the difference between the variance of the pure states and the eigenstate spread, by looking at the analytic description of the eigenstate spread
\begin{align}
    V^{(k)} &= \Tr\left( \rho^{(k)}H^2 \right) - \mathbb{E}\left[ {E^{(k)}}^2 \right] \nonumber\\
    &= \Tr\left( \rho^{(k)}H^2 \right) - \left(\Tr(\rho^{(k)} H)\right)^2,
\end{align}
where the second term differs from Eq.~\eqref{eq:variance}.

We show the eigenvalue spread in Fig.~\ref{V_approx} and observe that the over several iterations of the protocol the spread decreases. 
For both of the postselection scenarios, the eigenvalue spread converges to a fixed value:
\begin{align} \label{eq:spread:general}
     V^{(k)}_{k\rightarrow\infty} = \frac{\sum_{j}\lambda_j^2 |c^{(0)}_j|^4}{\sum_{j}|c^{(0)}_j|^4}-\left( \frac{\sum_{j}\lambda_j |c^{(0)}_j|^4}{\sum_{j}|c^{(0)}_j|^4} \right)^2. 
\end{align}
This corresponds to the smallest populations having been suppressed. 
We observe the eigenvalue spread converge faster towards this value for the \emph{strong} postselection criterion.

\begin{figure}[t]
    \centering
    \includegraphics[width=.8\linewidth]{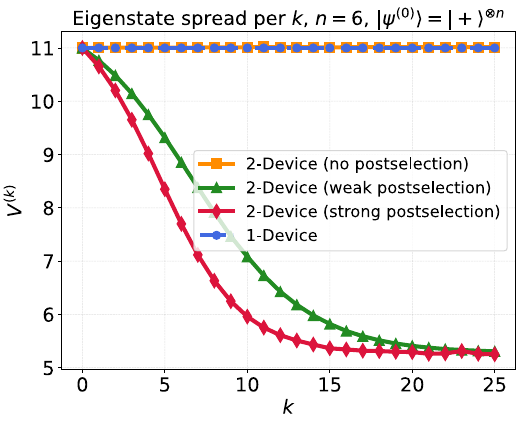}
    \caption{Eigenvalue spread $V^{(k)}$ for $n=6$ and $\ket{\psi^{(0)}}=\ket{+}^{\otimes n}$. Note that the eigenvalue spread goes to a fixed lower bound under the postselection circumstances.}
    \label{V_approx}
\end{figure}

We can find an analytical expression for the asymptotic behavior of the eigenstate spread by considering a smooth distribution for the amplitudes of the Hamiltonian eigenvalues and also for the initial populations of the state. 
Concretely, we assume $H$ is traceless, local and has a Gaussian density of states centered at zero with variance $\sigma^2$ (cf.~\cite{hartmann2005spectral}).
The input product state $\ket{\psi^{(0)}}$ is assumed to have eigenstate populations with Gaussian shape centered at energy $\mu$ and with variance $\xi^2$ (cf.~\cite{rai2024matrix}). 
The eigenstate spread is a function of the initial distributions, the difference in the limit of many iterations $k\rightarrow\infty$ from the initial eigenvalue spread is described by the expression
\begin{align} \label{eq:spread:cont}
    V^{(0)}-V^{(k)}_{k\rightarrow\infty} &=\frac{1}{\left( \frac{\xi^2}{\sigma^2}+1 \right)\left( \frac{\xi^2}{\sigma^2}+2 \right)} \xi^2.
\end{align}
In the limit of an extremely narrow energy distribution of the eigenstates ($\xi\rightarrow0$), the eigenvalue spread does not change anymore during the protocol.
We provide a rigorous derivation of the expressions Eq.~\eqref{eq:spread:general} and Eq.~\eqref{eq:spread:cont} in the Appendix~\ref{app:ssec:spread} for the \emph{weak} postselection criterion. For the \emph{strong} postselection criterion, we numerically observe that the same expression holds.

We remark that the ensemble $\rho^{(k)}$ is special in the sense that we know, in the large number of rounds limit, it is generated as a mixture of eigenstates of the Hamiltonian, but in a smaller energy window than that of the initial product state. 
The smaller window is a structural property that depends on the density of states and the initial state distribution.

\subsection{Energy bias analysis}

We now turn to the energy behavior of the resulting states, focusing on the ensemble average of the pure state energy at iteration $k$, given by
\begin{equation}
    \mathbb{E}\left[ E^{(k)} \right] = \mathbb{E}\left[ \bra{\psi^{(k)}}H\ket{\psi^{(k)}} \right] = \Tr\left( \rho^{(k)}H \right).
\end{equation}
Using the same setup as above, we have shown the average energy obtained in numerical simulations in Fig.~\ref{energy}. It follows from the Born rule~\cite{born1926zur}, that for both the single-device and two-device algorithms without postselection, the average energy remains constant, matching the energy of the input state $\ket{\psi^{(0)}}$:
\begin{equation}
    \mathbb{E}\left[ E^{(k)} \right] = \sum_{j} \lambda_j \left| c_j^{(0)} \right|^2, 
\end{equation}
where $\lambda_j$ is the energy of the $j$th eigenstate of $H$.
However, for the two distributed cases with postselection, energy biases are introduced in this process. 
Specifically, with the postselection, one can show that in the limit of many iterations $k$
\begin{align}
 \mathbb{E}\left[ E^{(k)} \right]_{k\rightarrow \infty} \rightarrow \bigg(\sum_j \lambda_j \left| c_j^{(0)} \right|^4\bigg) \Big/ \bigg(\sum_j \left| c_j^{(0)} \right|^4\bigg), 
\end{align}
effectively bounding the energy bias introduced.
Hence, the eigenenergies of the Hamiltonian and the initial state distribution fully determine the expected converged energy.
The effective effect can either drive the system towards an effective higher or lower average energy.
Again, similar to the energy spread, $\mathbb{E}\left[ E^{(k)} \right]$ with the \emph{strong} postselection criterion shows faster convergence.

\begin{figure}[t]
    \centering
    \includegraphics[width=.85\linewidth]{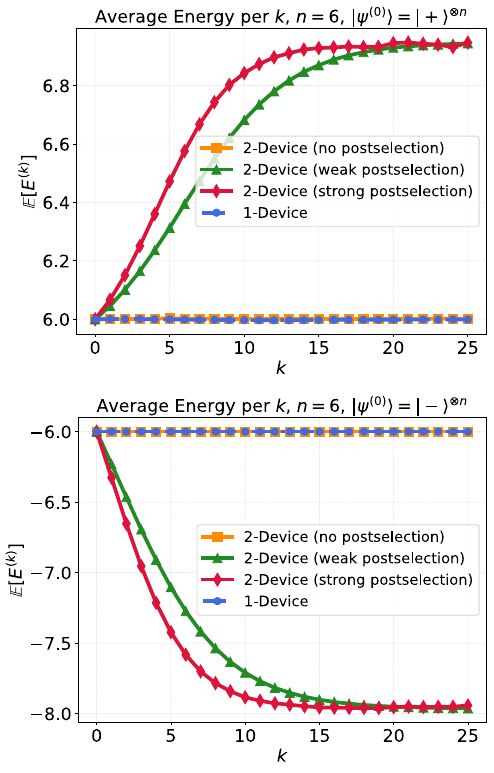}
    \caption{Average energy of the pure state ensemble per iteration $k$ for $n=6$, $\ket{\psi^{(0)}}=\ket{+}^{\otimes n}$ and $\ket{-}^{\otimes n}$. Energy biases are introduced in cases with postselection. Note that the energy bias can be either positive or negative, depending on the sign of $\mathbb{E}[E^{(k)}]_{k\rightarrow \infty}$. Additional numerical simulations with complete error bars are shown in the Appendix~\ref{appx-complete}.}
    \label{energy}
\end{figure}

We provide an additional analysis of the energy bias for the continuous case in Appendix~\ref{appx-dist-energy}, where both the density of states of the local $H$ and the populations of $\ket{\psi^{(0)}}$ have Gaussian shapes.
Again, we assume $H$ is traceless, local and has a Gaussian density of states centered at zero with variance $\sigma^2$. As in the previous section, the input product state $\ket{\psi^{(0)}}$ is assumed to have eigenstate populations with Gaussian shape centered at energy $\mu$ and with variance $\xi^2$. 
Then, we show that the largest bias we can introduce is
\begin{align}
    \left|E^{(0)}-\mathbb{E}\left[ E^{(k)} \right]_{k\rightarrow\infty}\right| &=\left|\frac{\mu}{\frac{\xi^2}{2\sigma^2}+1} -\frac{\mu}{\frac{\xi^2}{\sigma^2}+1}\right| \nonumber\\
    &= |\mu|\left( \frac{1}{\frac{\xi^2}{2\sigma^2}+1} -\frac{1}{\frac{\xi^2}{\sigma^2}+1} \right)\nonumber\\
    &\leq |\mu| \frac{1}{2\sqrt{2}+3} 
\end{align}
The full derivation is included in Appendix~\ref{app:ssec:weak}.

\section{Multi-device extension\label{sec3}}

We now show how to extend the distributed filter method to an arbitrary number of devices. 
A natural approach is to generalize the swap operator $S$ to a derangement operator $D$. One way to achieve this is shown by the following circuit:

\begin{equation}
\small
    \begin{quantikz}[row sep = {0.5cm,between origins}]
    \lstick{1} & \gate[5]{D} &\qw \rstick{2}\\
    \lstick{2} &  &\qw \rstick{3}\\
    \lstick{3} &  &\qw \rstick{4}\\
    \lstick{\vdots} & &\qw  \rstick{\vdots}\\
    \lstick{$s$} & &\qw \rstick{1}\\
    \end{quantikz}=
    \begin{quantikz}[row sep = {0.5cm,between origins}]
    \lstick{1} & \swap{1} &\qw &\qw &\qw &\qw \rstick{2} \\
    \lstick{2} & \targX{} & \swap{1} &\qw &\qw &\qw\rstick{3} \\
    \lstick{3} &\qw & \targX{} & \swap{1} &\qw &\qw\rstick{4} \\
    \lstick{\vdots} &\qw &\qw & \targX{} & \swap{1} &\qw\rstick{\vdots}  \\
    \lstick{$s$} &\qw &\qw &\qw & \targX{} &\qw\rstick{1}  \\
    \end{quantikz}.
    \label{D1}
\end{equation}

However, this generalization does not act as an effective filtering operation, as briefly mentioned in~\cite{schiffer2025quantum}. 
Instead, we consider generalizing the swap test circuit to a cyclic permutation test~\cite{kada2008efficiency,buhrman2024permutation,liu2025generalized}, depicted in Fig.~\ref{three-dev}(a). 
To apply the filter across $s$ ($s\geq 2$) devices, the top auxiliary qudit must have $s$ levels. The operator $F$ performs the discrete Fourier transform, defined as: $\ket{z}\rightarrow\frac{1}{\sqrt{s}}\sum_{q=0}^{s-1}\omega^{zq}\ket{q}$ where $\omega=e^{2\pi i/s}$. The multi-level controlled-$D$ operator $\sum_{q=0}^{s-1}\ket{q}\bra{q}\otimes D^q$ is then applied as follows:
\begin{equation}
    \begin{quantikz}[row sep = {0.5cm, between origins}]
    \lstick{$\sum_{q=0}^{s-1}a_q\ket{q}$}& \ctrl{1} &\qw \\
    \lstick{$\ket{\Psi}$}&\gate{D} &\qw \\
    \end{quantikz} \ = \ \sum_{q=0}^{s-1} a_q \ket{q} \otimes D^q \ket{\Psi},
    \label{Dops}
\end{equation}
where $a_q\in\mathbb{C}$ and $\sum_{q=0}^{s-1}|a_q|^2=1$. An inverse discrete Fourier transform $F^{\dagger}$ is then applied before measuring the top qudit. Similar to the two-device filter, we extend the postselection criteria correspondingly, and one can see that \emph{strong} selection can still make the states of each party remain product.
In addition, \emph{weak} postselection makes the state in general entangled but it evolves asymptotically to a product of identical eigenstates when $K$ is large. 
The pseudocode for the $s$-device filter is shown in Algorithm~\ref{algo-s}.

\begin{algorithm}[H]
\caption{$s$-device distributed filtering algorithm}
\label{algo-s}
    \begin{algorithmic}[1]
        \REQUIRE $s$ product states $\ket{\psi_1}\otimes\cdots\otimes\ket{\psi_s}=\ket{\psi^{(0)}}^{\otimes s}$; target iteration $K$; postselection type $\mathcal{P}$.
        \ENSURE State $\ket{\Psi^{(K)}_{12\cdots s}}$.
        \STATE Initialize: $\ket{\Psi^{(0)}} \gets \ket{\psi^{(0)}}^{\otimes s}$;
        \FOR{$k = 1$ to $K$}
            \STATE Generate random time $t_k$ such that $\varphi_j^{(k)} = -t_k \lambda_j \mod 2\pi \sim \text{Uniform}(0, 2\pi)$ for every $j$;
            \STATE Input $\ket{\Psi^{(k-1)}}$ and $t_k$ to the circuit in Fig.~\ref{three-dev}(a);
            \STATE Measure and get the result $\ket{m_k} = \ket{m^{(k)}_0 m^{(k)}_1 \cdots m^{(k)}_s}$;
            \IF{$\mathcal{P} = \text{\emph{weak}} \textbf{ and } m^{(k)}_0 = 0$}
                \STATE Output $\ket{\Psi^{(k)}}$ and \textbf{continue};
            \ELSIF{$\mathcal{P} = \text{\emph{strong}} \textbf{ and } (m_k = 00\cdots0 \textbf{ or } m_k = 01\cdots1)$}
                \STATE Output $\ket{\Psi^{(k)}}$ and \textbf{continue};
            \ELSE
                \STATE \textbf{break} and start over from initialization;
            \ENDIF
        \ENDFOR
        \RETURN $\ket{\Psi_{12\cdots s}^{(K)}}$.
    \end{algorithmic}
\end{algorithm}

\begin{figure*}
    \centering
    \includegraphics[width=1.0\linewidth]{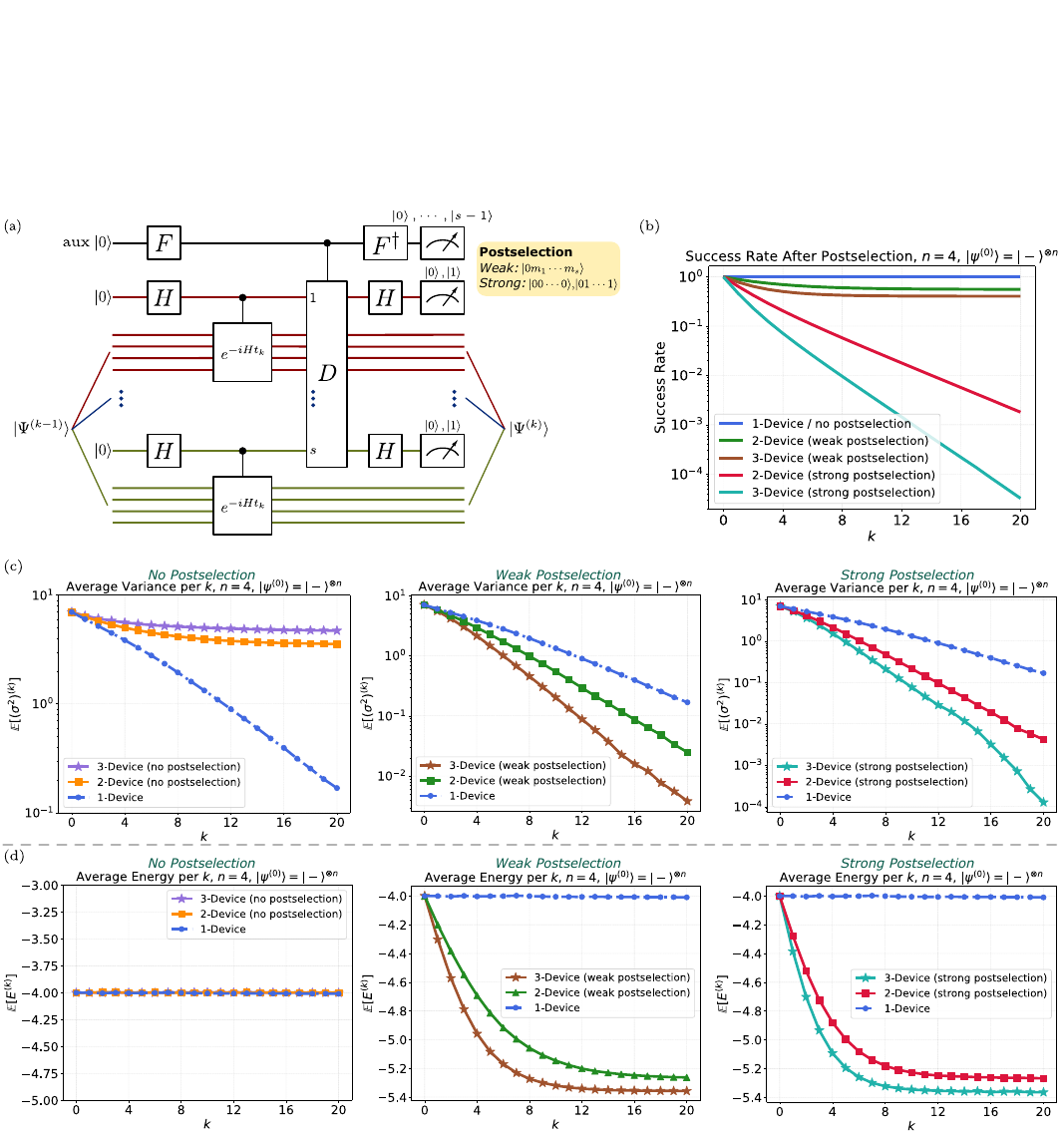}
    \caption{(a) Proposed distributed filtering algorithm across $s \geq 2$ devices. Similar to the two-device algorithm, the input is a product state $\ket{\Psi^{(0)}} = \ket{\psi^{(0)}}^{\otimes s}$ with sufficiently large random values $t_k$ for each iteration $k$. Additionally, a top auxiliary qudit with $s$ levels is introduced, and a cyclic permutation test~\cite{kada2008efficiency,buhrman2024permutation,liu2025generalized} is applied across the top auxiliary qudit and the Hadamard test auxiliary qubits of each device. Postselection, as defined above, is still required. Numerics for $n=4$ and $\ket{\psi^{(0)}}=\ket{-}^{\otimes n}$ for single-, two-, and three-device cases are shown in (b), (c), and (d): (b) cumulative success rate after postselection; (c) average variance; and (d) average energy of the pure state ensemble per iteration $k$ under no postselection, \emph{weak} postselection, and \emph{strong} postselection, respectively. We use $10^7$ trials from the beginning for \emph{strong} postselection cases and $10^5$ for the rest to acquire accurage enough expectation values.
    Plots with complete error bars are presented in Appendix~\ref{appx-complete}.}
    \label{three-dev}
\end{figure*}

We numerically benchmark the performance of a distributed filter algorithm on a single device, two or three devices in Fig.~\ref{three-dev}(b,c,d). We choose $n=4$ and $\ket{\psi^{(0)}}=\ket{+}^{\otimes n}$ per device. Notably, the three-device filter (with either \emph{weak} or \emph{strong} postselection) consistently yields states with lower average energy variance. It also introduces greater energy biases. Regarding the cumulative success rate, similar to the two-device case, \emph{strong} postselection incurs exponential overhead of the success rate with respect to the number of iterations $k$, while the overhead for \emph{weak} postselection stabilizes after several iterations, which is lower-bounded by $\sum_j|c_j^{(0)}|^{2s}$ for general $s$.
Additionally, the three-device algorithm always incurs more overhead, as the number of possible measurement results that are discarded under postselection increases.
Note that when the number of devices $s$ is prime, each $D^q$ for $q\neq0$ constitutes a full-cycle permutation on the registers~\cite{liu2025generalized}. However, this does not hold for non-prime $s$, which might affect the performance of the multi-device distributed filtering algorithm under \emph{weak} postselection.
We provide further analysis on this generalization in Appendix~\ref{appendix:3-dev}.

Numerical simulations for more than three devices are not feasible due to the exponential growth of Hilbert space. However, already from the data available for up to three devices and four qubits we can attempt an extrapolation of the protocol performance.
We perform an exponential fit $\propto\exp(-\eta k)$ of the average variance shown in Fig.~\ref{three-dev}(c) in the interval $k\in[4,12]$, i.e.,~after the initial roll-off and before the data for the \emph{strong} postselection case has some visible fluctuations due to a smaller number of samples remaining after postselection.
For the single device case, we find $\eta = 0.178$.
Using the \emph{weak} postselection criterion, we have $\eta = 0.277$ for two devices and $\eta = 0.386$ for three devices. 
For the \emph{strong} postselection criterion, we find  $\eta = 0.376$ for two devices and $\eta=0.490$ for three devices. 

With only limited data available due to the prohibitive simulation cost with the system size and the number of devices, an extrapolation to more than two devices is only indicative.
However, for the \emph{strong} postselection case there is a related analytical result in the limit of many devices~\cite{schiffer2025quantum}: the relative suppression between a dominant and a subdominant amplitude converges asymptotically to a fixed value. 
However, the data in Fig.~\ref{three-dev}(c) does not consider the ratio between amplitudes, but takes all amplitudes of the state into consideration. 
While an immediate quantitative comparison with the analytical result is therefore not possible, we conjecture that the scaling of the average variance of the \emph{strong} postselection result could also converge, implying that additional devices result in diminishing returns.

\section{Discussion and Outlook\label{sec4}}

In this work, we have presented a distributed filtering algorithm for eigenstate preparation that employs multiple quantum devices with postselection. By coordinating operations across two devices, this approach allows one to prepare low-variance states with a lower circuit depth than in a single-device setup. 
Our results demonstrate that postselection can play a critical role in accelerating convergence towards the eigenstates. We emphasize that only the overhead for the \emph{strong} postselection grows exponentially in the number of rounds. For \emph{weak} postselection, the overhead due to postselection converges to a fixed value that is independent of the number of rounds of the protocol.

The protocol we present is a distributed algorithm inspired by methods from iterative quantum phase estimation (IQPE). 
Our technique inherits the minimal requirement of a single auxiliary qubit for the conditional dynamics from IQPE.
Textbook QPE, however, proceeds by including an inverse quantum Fourier transformation on the auxiliary register. 
One can devise a similar distributed algorithm for textbook QPE, although this would require additional teleportation operations per circuit between the two devices due to the additional auxiliary qubits than the ones considered in this work.

The two-device circuit was previously presented in~\cite{schiffer2025quantum} with an eigenstate broadcasting application, so we briefly highlight the novel contributions in this work. First, we no longer assume the initial state to be in a superposition of a dominant eigenstate and other subdominant state. Rather, we consider an energy distribution described by its mean and variance, corresponding to the typical scenario of an initial product states~\cite{rai2024matrix}.
Second, we introduced novel technical analysis tools based on the moments of the distribution. 
Next, we have included a new numerical analysis on the different use (which we named \emph{weak} vs.~\emph{strong}) of postselection in the distributed quantum algorithm. 
Finally, we present a new approach to realizing a symmetric projection of the auxiliary qubits that uses a qudit-controlled derangement operator.

The extension of the distributed algorithms to arbitrarily many connected quantum devices is a main result of our paper. We also show numerically that a third device allows for an even larger reduction of the energy variance for a fixed number of iterations of the protocol than with two devices. This naturally leads to an increased reset overhead, which is also bounded for the \emph{weak} postselection criterion, and increases exponentially with each iteration for the \emph{strong} postselection criterion. However, when successful, \emph{strong} postselection converges exponentially faster than the corresponding \emph{weak} postselection scheme.

Our findings also suggest that a hybrid protocol that starts with \emph{strong} postselection and transitions to \emph{weak} postselection would optimize resources better, as the \emph{weak} postselection variance plateaus when the devices hold a sufficiently good approximation to identical eigenstates.

Overall, distributed quantum algorithms offer promising strategies to accelerate eigenstate preparation by reducing the circuit depth required. 
Practical realizations of these algorithms in noisy quantum devices may shed light on optimal engineering of projections onto symmetric subspaces and further improvements of distributed techniques more generally.
\vspace{10mm}

\begin{acknowledgments}
The numerical experiments of this work were performed using the compute resources from the Academic Leiden Interdisciplinary Cluster Environment (ALICE) provided by Leiden University.
X.L.~thanks Kshiti Sneh Rai, Johannes Knörzer, Zherui Jerry Wang and Yuning Zhang for the helpful discussions.
B.F.S.~acknowledges funding from the the Federal Ministry of Education and Research Germany (BMBF) via the project FermiQP (13N15889).
Work in Munich was part of the Munich Quantum Valley, which is supported by the Bavarian state government with funds from the Hightech Agenda Bayern Plus.
J.T.~acknowledges the support received from the European Union’s Horizon Europe research and innovation programme through the ERC StG FINETEA-SQUAD (Grant No. 101040729). J.T.~also acknowledges the support received by the Dutch National Growth Fund (NGF), as part of the Quantum Delta NL programme. This publication is part of the ‘Quantum Inspire – the Dutch Quantum Computer in the Cloud’ project (with project number [NWA.1292.19.194]) of the NWA research program ‘Research on Routes by Consortia (ORC)’, which is funded by the Netherlands Organization for Scientific Research (NWO). 
The views and opinions expressed here are solely those of the authors and do not necessarily reflect those of the funding institutions. Neither of the funding institutions can be held responsible for them.
\end{acknowledgments}

\bibliographystyle{quantum}
\bibliography{Dist_Filter}

\onecolumngrid

\clearpage
\appendix

\section{Single-device filter}\label{appx-single}

We start from stating the single-device filter algorithm~\cite{abrams1999quantum, xu2014demonlike, chen2020quantum, meister2022resourcefrugal, schiffer2025quantum} and the pseudocode is shown below:

\begin{algorithm}[H]
\caption{Single-device distributed ancilla-mediated filtering algorithms.}
\label{algo3}
\begin{algorithmic}[1]
\REQUIRE A product state $\ket{\psi^{(0)}}$; target iteration $K$.
\ENSURE State $\ket{\psi^{(K)}}$.
\STATE Initialize: $\ket{\Psi^{(0)}} \gets \ket{\psi^{(0)}}$;
\FOR{$k = 1$ to $K$}
    \STATE Generate random time $t_k$ such that $\varphi_j^{(k)} = -t_k \lambda_j \mod 2\pi \sim \text{Uniform}(0, 2\pi)$ for every $j$;
    \STATE Input $\ket{\Psi^{(k-1)}}$ and $t_k$ to the circuit in Fig.~\ref{single-dev};
    \STATE Measure and output $\ket{\Psi^{(k)}}$;
    \STATE \textbf{continue};
\ENDFOR
\STATE $\ket{\psi^{(K)}} \gets \ket{\Psi^{(K)}}$;
\RETURN $\ket{\psi^{(K)}}$.
\end{algorithmic}
\end{algorithm}

Similarly, we would like to find out the behavior of the output states' average $E$ and $\sigma^2$.

\begin{figure}[h]
    \centering
    \includegraphics[width=0.5\linewidth]{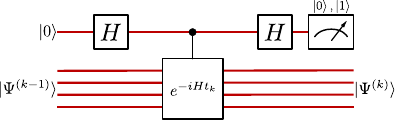}
    \caption{Circuit for single-device filtering algorithm.}
    \label{single-dev}
\end{figure}

\subsection{Energy}

After $K\in \mathbb{Z}^{+}$ iterations, the output state $\ket{\psi^{(K)}}$ has the energy:
\begin{equation}
    E^{(K)} = \bra{\psi^{(K)}}H\ket{\psi^{(K)}}.
\end{equation}
As mentioned in the main text, for each repetition, $\ket{\psi^{(K)}}$ can vary. Suppose we have enough repetitions $r$ which corresponds to each state $\ket{\psi_r^{(K)}}$ with probability $p_r$. Therefore, assume the sequence of time evolution $\mathbf{t}=\{t_1,t_2,\cdots,t_K\}$ on average, we have:
\begin{equation}
    \mathbb{E} \left[ E^{(K)} \Big| \mathbf{t} \right] = \sum_{r}p_r\bra{\psi_r^{(K)}} H \ket{\psi_r^{(K)}}=\Tr \left(H \sum_r p_r\ket{\psi_r^{(K)}}\bra{\psi_r^{(K)}}\right) = \Tr\left(H\rho_{\mathbf{t}}^{(K)}\right),
\end{equation}
where we denote $\rho_{\mathbf{t}}^{(K)}$ as the density matrix at iteration $K$ under the fixed time evolution sequence $\mathbf{t}$. Note that for the Hadamard test operation at each iteration $k$, we have a channel:
\begin{equation}
\rho_{\mathbf{t}}^{(k)}=\mathcal{E}^{(k)}(\rho^{(k-1)})=\sum_{m=0}^{1}\mathbf{E}_m^{(k)}\rho^{(k-1)}{\mathbf{E}_m^{(k)}}^{\dagger},
\end{equation}
with Kraus operators:
\begin{equation}
\begin{split}
    \mathbf{E}^{(k)}_0 &= \frac{1}{2}(I+U(t_k)) \ \ \ \ \ \text{$\rightarrow$ $\ket{0}$ on the auxiliary qubit}; \\
    \mathbf{E}^{(k)}_1 &= \frac{1}{2}(I-U(t_k)) \ \ \ \ \ \text{$\rightarrow$ $\ket{1}$ on the auxiliary qubit}.
\end{split}
\end{equation}
Therefore,
\begin{equation}
\begin{split}
    \rho_{\mathbf{t}}^{(K)}  = \mathcal{E}^{(K)} \circ \mathcal{E}^{(K-1)} \circ \cdots \circ \mathcal{E}^{(1)}\left(\ket{\psi^{(0)}}\bra{\psi^{(0)}}\right). 
\end{split}
\end{equation}

Note that:
\begin{equation}
    \rho_{\mathbf{t}}^{(k)} = \mathcal{E}^{(k)} \left(\rho^{(k-1)}\right) = \frac{1}{2} \left( \rho^{(k-1)} + U(t_k) \rho^{(k-1)} U(t_k)^{\dagger} \right),
\end{equation}
including the input state, controlled unitaries and Hamiltonian into the eigenbasis, i.e.,
\begin{equation}
    \ket{\psi^{(0)}}=\sum_{j}c_j^{(0)}\ket{\phi_j};
\end{equation}
\begin{equation}
    U(t_k)=e^{-it_k H}=\sum_{j}e^{-it_k \lambda_j}\ket{\phi_j}\bra{\phi_j} = \sum_{j}e^{i\varphi^{(k)}_j}\ket{\phi_j}\bra{\phi_j};
\end{equation}
\begin{equation}
    H = \sum_{j}\lambda_j \ket{\phi_j}\bra{\phi_j},
\end{equation}
we have (we omit superscript $^{(0)}$ of $c_j^{(0)}$ for simplicity):
\begin{equation}
\begin{split}
    \rho_{\mathbf{t}}^{(K)}  &= \mathcal{E}^{(K)} \circ \mathcal{E}^{(K-1)} \circ \cdots \circ \mathcal{E}^{(1)}\left(\ket{\psi^{(0)}}\bra{\psi^{(0)}}\right)  \\
                &= \frac{1}{2^K}\sum_{\alpha \in \mathcal{P}(\{1,2,\cdots,K\})} \left(\prod_{a\in\alpha}U(t_a)\right) \ket{\psi^{(0)}} \bra{\psi^{(0)}} \left(\prod_{a'\in\alpha}U(t_{a'})\right)^{\dagger} \\
                &= \frac{1}{2^K}\sum_{\alpha \in \mathcal{P}(\{1,2,\cdots,K\})} \sum_{l,l'} \exp\left(i\sum_{a\in\alpha}\varphi_l^{(a)}-i\sum_{a'\in\alpha}\varphi_{l'}^{(a')}\right)c_lc_{l'}^{*}\ket{\phi_l}\bra{\phi_{l'}},
\end{split}
\end{equation}
where $\mathcal{P}(\{1,2,\cdots,K\})$ denotes all the subsets of $\{1,2,\cdots,K\}$. Therefore:
\begin{equation}
    \begin{split}
        \mathbb{E}\left[E^{(K)} \Big| \mathbf{t} \right]  &= \Tr\left(H\rho_{\mathbf{t}}^{(K)}\right) \\
                            &= \frac{1}{2^K}\sum_{\alpha \in \mathcal{P}(\{1,2,\cdots,K\})}\Tr\left(\sum_{j}\lambda_j \ket{\phi_j}\bra{\phi_j}   \sum_{l,l'} \exp\left(i\sum_{a\in\alpha}\varphi_l^{(a)}-i\sum_{a'\in\alpha}\varphi_{l'}^{(a')}\right)c_lc_{l'}^{*}\ket{\phi_l}\bra{\phi_{l'}} \right) \\
                            &= \Tr\left(\sum_{l,l'} \lambda_l c_lc_{l'}^{*}\ket{\phi_l}\bra{\phi_{l'}}\right) = \sum_{l}\lambda_l \left| c_l \right|^2.
    \end{split}
\end{equation}
As $\mathbb{E}\left[E^{(K)} \Big| \mathbf{t} \right]$ does not depend on $\mathbf{t}$, therefore:
\begin{equation}
    \mathbb{E}\left[ E^{(k)} \right] = \sum_{j} \lambda_j \left| c_j \right|^2. 
\end{equation}
which equals to the energy of the initial state $\ket{\psi^{(0)}}$.

\subsection{Energy Variance} \label{appx-single-var}

The variance of the output state $\ket{\psi^{(K)}}$ is:
\begin{equation}
    \left(\sigma^2\right)^{(K)}=\bra{\psi^{(K)}} H^2 \ket{\psi^{(K)}} - \left(\bra{\psi^{(K)}}H\ket{\psi^{(K)}}\right)^2,
\end{equation}
and on average:
\begin{equation}
    \mathbb{E}\left[ \left(\sigma^{2}\right)^{(K)}  \right] = \mathbb{E}\left[\bra{\psi^{(K)}} H^2 \ket{\psi^{(K)}}\right] - \mathbb{E}\left[\left(\bra{\psi^{(K)}}H\ket{\psi^{(K)}}\right)^2\right].
\end{equation}
Similar to the previous derivations, we have the first term as:
\begin{equation}
    \mathbb{E}\left[\bra{\psi^{(K)}} H^2 \ket{\psi^{(K)}}\right] = \Tr\left(H^2\rho^{(K)}\right) = \sum_{j}\lambda_j^2\left| c_j \right|^2.
\end{equation}
The second term is complicated. First, we again fix the time sequence $\mathbf{t}$ and find:
\begin{equation}
    \mathbb{E}\left[\left( \bra{\psi^{(K)}} H \ket{\psi^{(K)}} \right)^2 \Big| \mathbf{t} \right] = \sum_r p_r \left( \bra{\psi_r^{(K)}} H \ket{\psi_r^{(K)}} \right)^2.
\end{equation}
Note that each $\ket{\psi^{(K)}_r}$ can be denoted as:
\begin{equation}
    \ket{\psi_r^{(K)}} = \frac{1}{\sqrt{P_{\mathbf{m}}}}\frac{1}{2^K}\prod_{k=1}^{K}(I+(-1)^{m_k}U(t_k))\ket{\psi^{(0)}}=\frac{1}{\sqrt{P_{\mathbf{m}}}}\frac{1}{2^K}\sum_{j} c_j \prod_{k=1}^{K}\left(1+(-1)^{m_k}e^{i\varphi_j^{(k)}}\right)\ket{\phi_j},
    \label{psirK}
\end{equation}
where $P_{\mathbf{m}}$ denotes the probability of getting the measurement result $\mathbf{m}=\{m_1,m_2,\cdots, m_K\}$ on the auxiliary qubit of each iteration $k$ at repetition $r$. Then we have:
\begin{equation}
    \bra{\psi_r^{(K)}} H \ket{\psi_r^{(K)}}=\frac{1}{P_{\mathbf{m}}} \frac{1}{2^K} \sum_{j} \left|c_j\right|^2 \prod_{k=1}^{K}\left(1+(-1)^{m_k}\cos \varphi^{(k)}_{j}\right)\lambda_j,
\end{equation}
and:
\begin{equation}
    P_{\mathbf{m}}= \frac{1}{2^K}\sum_{j} \left|c_j\right|^2 \prod_{k=1}^{K} \left(1+(-1)^{m_k}\cos \varphi^{(k)}_{j}\right).
    \label{Pm}
\end{equation}
Therefore:
\begin{equation}
\begin{split}
    \mathbb{E}\left[\left( \bra{\psi^{(K)}} H \ket{\psi^{(K)}} \right)^2 \Big| \mathbf{t} \right] &= \sum_r p_r \left( \bra{\psi_r^{(K)}} H \ket{\psi_r^{(K)}} \right)^2 \\
    &= \frac{1}{2^K}\sum_{\mathbf{m}} \frac{\left(\sum_{j} \left|c_j\right|^2\lambda_j \prod_{k=1}^{K}\left(1+(-1)^{m_k}\cos \varphi^{(k)}_{j}\right)\right)^2}{\sum_{j} \left|c_j\right|^2 \prod_{k=1}^{K} \left(1+(-1)^{m_k}\cos \varphi^{(k)}_{j}\right)} \\
    &= \frac{1}{2^K}\sum_{\mathbf{m}} \frac{\left(\sum_{j} \left|c_j\right|^2\lambda_j \prod_{k=1}^{K}\left(1+(-1)^{m_k}\cos \lambda_j t_k\right)\right)^2}{\sum_{j} \left|c_j\right|^2 \prod_{k=1}^{K} \left(1+(-1)^{m_k}\cos \lambda_j t_k\right)}.
\end{split}
\end{equation}
Now, suppose $\mathbf{t}$ are uniformly distributed within a time interval $T$. Then, we have:
\begin{equation}
\begin{split}
    \mathbb{E}\left[\left( \bra{\psi^{(K)}} H \ket{\psi^{(K)}} \right)^2 \right] 
    &= \frac{1}{T^K}\frac{1}{2^K}\sum_{\mathbf{m}} \int_0^T\cdots\int_0^T \frac{\left(\sum_{j} \left|c_j\right|^2\lambda_j \prod_{k=1}^{K}\left(1+(-1)^{m_k}\cos \lambda_j t_k\right)\right)^2}{\sum_{j} \left|c_j\right|^2 \prod_{k=1}^{K} \left(1+(-1)^{m_k}\cos \lambda_j t_k\right)}dt_1\cdots dt_K. 
\end{split}
\end{equation}
Next, we show why we need large $T$ to maximize the quantity $\mathbb{E}\left[\left( \bra{\psi^{(K)}} H \ket{\psi^{(K)}} \right)^2 \right]$, thus minimize $\mathbb{E}\left[ \left(\sigma^{2}\right)^{(K)}  \right]$. Since we lack prior knowledge of the populations of $\ket{\psi^{(0)}}$ in the eigenbasis, one of the optimal approaches is to completely randomize the phases $\varphi_j^{(k)}$. Specifically, from Eq.~\eqref{Pm} and~\eqref{psirK}, we can observe that: suppose one of the populations $\left| c^{(k-1)}_j \right|^2$ is already sufficiently larger than the other populations at $(k-1)$-th iteration. And we assume for the $k$-th iteration, the extra factor $ \left(1+(-1)^{m_k}\cos \varphi^{(k)}_{j}\right) > 1$. This, in general, magnifies the probability $P_{\mathbf{m}}$ of achieving $\left| c^{(k)}_j \right|^2 = \left| c^{(k-1)}_j \right|^2 \left(1+(-1)^{m_k}\cos \varphi^{(k)}_{j}\right) > \left| c^{(k-1)}_j \right|^2$, amplifying this population even more. In order to achieve this, we need to let $\varphi_j^{(k)}$ be independent with each other, which requires large time interval $T$. Otherwise, the correlations among different $\varphi_j^{(k)}$ may magnify several populations instead of only one, turning into superpositions of eigenstates.
\begin{proposition}
\label{prop1}
    Under the assumptions that $\frac{\lambda_{j'}}{\lambda_{j''}}$ is (almost) irrational for any $j'\neq j''\in\{1,2,\cdots,2^n\}$, for the time interval $T$ in large limit and $t_k$ uniformly sampled within the interval, $\varphi_j^{(k)}=-t_k\lambda_j \Mod{2\pi}$ can be approximated as i.i.d. in $\text{Uniform}(0,2\pi)$ for all $j$ and $k$. 
\end{proposition}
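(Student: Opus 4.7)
The plan is to invoke Weyl's equidistribution theorem on the torus $\mathbb{T}^N$ with $N = 2^n$. The independence across distinct iterations $k$ is immediate since the times $t_k$ are drawn independently, so the real task is to prove that, for a fixed $k$, the joint distribution of the phase vector $(\varphi_1^{(k)}, \ldots, \varphi_N^{(k)})$ induced by $t_k \sim \text{Uniform}(0,T)$ converges weakly to the uniform distribution on $[0,2\pi)^N$ as $T \to \infty$. The per-iteration marginal will follow by integrating out all but one coordinate.

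Let $\mu_T$ denote this induced measure. By Weyl's criterion, $\mu_T$ converges to the uniform measure on $\mathbb{T}^N$ if and only if every non-trivial Fourier coefficient vanishes in the limit. Direct computation gives
\begin{equation*}
\widehat{\mu_T}(\mathbf{m}) = \frac{1}{T}\int_0^T \exp\!\left(-i t \sum_{j=1}^N m_j \lambda_j\right) dt = \frac{1 - e^{-iT\Lambda_{\mathbf{m}}}}{i T \Lambda_{\mathbf{m}}},
\end{equation*}
for any non-zero $\mathbf{m} \in \mathbb{Z}^N$, where $\Lambda_{\mathbf{m}} = \sum_j m_j \lambda_j$. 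Hence $|\widehat{\mu_T}(\mathbf{m})| \leq 2/(T|\Lambda_{\mathbf{m}}|) \to 0$ as $T \to \infty$, provided $\Lambda_{\mathbf{m}} \neq 0$ for every non-zero integer vector $\mathbf{m}$, i.e.\ the eigenvalues $\lambda_1, \ldots, \lambda_N$ are $\mathbb{Q}$-linearly independent. Combined with independence of the $t_k$ across iterations, this would yield the claimed i.i.d.\ statement.

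The main obstacle is that the hypothesis as stated --- pairwise irrationality of the ratios $\lambda_{j'}/\lambda_{j''}$ --- is strictly weaker than the $\mathbb{Q}$-linear independence that Weyl's criterion actually needs; a counterexample is $\lambda_1 = 1$, $\lambda_2 = \sqrt{2}$, $\lambda_3 = 1 + \sqrt{2}$, whose pairwise ratios are all irrational while $\lambda_1 + \lambda_2 - \lambda_3 = 0$ creates a resonance. I read the qualifier ``almost'' as alluding to the fact that generic Hamiltonians do enjoy the stronger condition, so a rigorous write-up would either strengthen the assumption to $\mathbb{Q}$-independence of the spectrum, restrict to a generic Hamiltonian, or pass to a quantitative statement controlled by the smallest resonance $\min_{0<\|\mathbf{m}\|_1 \leq M} |\Lambda_{\mathbf{m}}|$ truncated at some $M$. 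A related subtlety is inherently quantitative: for finite $T$ the distribution is only approximately uniform, with error of order $1/(T\,\min|\Lambda_{\mathbf{m}}|)$, so ``sufficiently large $T$'' must be understood relative to the smallest non-zero resonance of the spectrum --- and when the Hamiltonian has small gaps, this requirement can become prohibitive.
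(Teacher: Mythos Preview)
Both you and the paper invoke Weyl's equidistribution criterion, but at different granularities. The paper works coordinate-wise: it first applies Weyl in one dimension (computing a $\mathrm{sinc}$ bound) to show each $\varphi_j^{(k)}$ is marginally uniform, and then argues \emph{pairwise} independence via a separate geometric observation that irrational $\lambda_{j'}/\lambda_{j''}$ forces the orbit $t\mapsto(-t\lambda_{j'},-t\lambda_{j''})\bmod 2\pi$ to be dense in the $2$-torus, while a rational ratio produces a periodic orbit that cannot fill $[0,2\pi)^2$. You instead apply Weyl directly on $\mathbb{T}^N$, computing the full multivariate characteristic function $\widehat{\mu_T}(\mathbf{m})$. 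The paper's route is tailored to its pairwise-ratio hypothesis and, strictly speaking, only delivers pairwise independence, which falls short of the joint i.i.d.\ statement the proposition actually asserts; your observation that genuine joint uniformity needs $\mathbb{Q}$-linear independence of the entire spectrum---together with your counterexample $\lambda=(1,\sqrt{2},1+\sqrt{2})$---exposes a gap the paper never confronts. Your treatment is therefore both more complete for the proposition as stated and more candid about what the hypothesis can support, and your quantitative remarks on the finite-$T$ error $O\big(1/(T\min_{\mathbf m}|\Lambda_{\mathbf m}|)\big)$ go beyond what the paper provides.
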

\begin{proof}
Firstly, we consider the Weyl's criterion of equidistribution~\cite{weyl1916ueber}, showing that $\varphi_j^{(k)}$ for specific $j$ and random $t_k$ distributed in large interval $T$ can be considered as $\text{Uniform}(0,2\pi)$. Weyl's criterion states that $\frac{\varphi_j^{(k)}}{2\pi} \Mod{1}$ is uniformly distributed if and only if for all non-zero integers $l$:
\begin{equation}
    \lim_{R\rightarrow\infty}\frac{1}{R}\sum_{r=1}^{R}e^{i2\pi l \frac{\varphi_j^{(k)}}{2\pi}} = \lim_{R\rightarrow\infty}\frac{1}{R}\sum_{r=1}^{R}e^{-i l t_{k,r}\lambda_j} = \frac{1}{T}\int_{T_{min}}^{T_{max}} e^{-i l t_{k}\lambda_j}dt_k  = 0.
\end{equation}
Here we suppose $t_k\sim\text{Uniform}(T_{min}, T_{max})$, $T_{max}>T_{min}>0$ and $T_{max}-T_{min}=T$. By integrating the above integral, we have:
\begin{equation}
    \left|\frac{1}{T}\int_{T_{min}}^{T_{max}} e^{-i l t_{k}\lambda_j}dt_k\right|^2 = \frac{1}{l^2}\text{sinc}^2\left( \frac{T\lambda_j}{2} \right) \leq \text{sinc}^2\left( \frac{T\lambda_j}{2} \right) \rightarrow 0,
\end{equation}
for large interval $T$. 

Secondly, we show that each $\varphi_j^{(k)}$ is independent with each other for all $j$ and $k$, under the assumptions that $\frac{\lambda_{j'}}{\lambda_{j''}}$ is irrational for any $j'\neq j''\in\{1,2,\cdots,2^n\}$. If $\frac{\lambda_{j'}}{\lambda_{j''}}$ is rational, i.e., $\frac{\lambda_{j'}}{\lambda_{j''}}=\frac{p}{q}$ where $p,q\in \mathbb{Z}$, then there exists a period $\mathrm{t}$ within $T_{min}\sim T_{max}$ such that $\frac{2\pi p}{\lambda_{j'}}=\frac{2\pi q}{\lambda_{j''}}=\mathrm{t}$. Because of this, the points with coordinates $\left(\varphi_{j'}^{(k)},\varphi_{j''}^{(k)}\right)$ cannot cover the whole area of $(0,2\pi)^2$. However, if $\frac{\lambda_{j'}}{\lambda_{j''}}$ is irrational, there will be no such $\mathrm{t}$ that exists. Thus, $\left(\varphi_{j'}^{(k)},\varphi_{j''}^{(k)}\right)$ can cover all $(0,2\pi)^2$ if $t_k$ is sufficiently large. The independence can also be found for rational $\frac{\lambda_{j'}}{\lambda_{j''}}$ but with extremely large period $\mathrm{t}$. In this case, $\left(\varphi_{j'}^{(k)},\varphi_{j''}^{(k)}\right)$ still nearly cover $(0,2\pi)^2$, before any periodic behavior emerges. This is where we name almost-irrational $\frac{\lambda_{j'}}{\lambda_{j''}}$.

A simple counter-example for the independence is that we choose the Hamiltonian as $H=\sum_{j}Z_jZ_{j+1}$ where it has symmetric eigenvalues like $\lambda_{j'}=-\lambda_{j''}$. In this case, $\varphi_{j'}^{(k)}$ and $\varphi_{j''}^{(k)}$ are strongly correlated. The points $\left(\varphi_{j'}^{(k)},\varphi_{j''}^{(k)}\right)$ form a diagonal line on $(0,2\pi)^2$. Consequently, this algorithm fails as it cannot result in $\left| c_{j'} \right|^2$ being magnified while $\left| c_{j''} \right|^2$ is suppressed, or vice versa. Shifting the Hamiltonian to break the symmetries can remedy this, i.e., $H'=H+\epsilon I$ (cf.~Appendix in \cite{schiffer2025quantum}). 
\end{proof}
Now, by proving Proposition~\ref{prop1}, we have:
\begin{equation}
\begin{split}
    &\mathbb{E}\left[\left( \bra{\psi^{(K)}} H \ket{\psi^{(K)}} \right)^2 \right] \\
    =&\frac{1}{2^K}\sum_{\mathbf{m}} \left(\frac{1}{2\pi} \right)^{NK} \int_{0}^{2\pi}\cdots\int_{0}^{2\pi}  \frac{\left(\sum_{j} \left|c_j\right|^2\lambda_j \prod_{k=1}^{K}\left(1+(-1)^{m_k}\cos \varphi^{(k)}_{j}\right)\right)^2}{\sum_{j} \left|c_j\right|^2 \prod_{k=1}^{K} \left(1+(-1)^{m_k}\cos \varphi^{(k)}_{j}\right)} d\varphi_{1}^{(1)}\cdots d\varphi_{N}^{(K)} \\
    =&\left(\frac{1}{2\pi} \right)^{NK} \int_{0}^{2\pi}\cdots\int_{0}^{2\pi}  \frac{\left(\sum_{j} \left|c_j\right|^2\lambda_j \prod_{k=1}^{K}\left(1-\cos \varphi^{(k)}_{j}\right)\right)^2}{\sum_{j} \left|c_j\right|^2 \prod_{k=1}^{K} \left(1-\cos \varphi^{(k)}_{j}\right)} d\varphi_{1}^{(1)}\cdots d\varphi_{N}^{(K)},
\end{split}
\end{equation}
where $N=2^n$. The second equality is due to the symmetries of $\cos \varphi_j^{(k)}$ within $0\sim 2\pi$. Note that this integral cannot be computed analytically. However, as shown in~\cite{chen2020quantum}, $\mathbb{E}\left[\left( \bra{\psi^{(K)}} H \ket{\psi^{(K)}} \right)^2 \right]$ increases with $K$, leading to a decrease in the average variance $\mathbb{E}\left[ \left(\sigma^{2}\right)^{(K)}  \right]$. Therefore, finally we have:
\begin{equation}
        \mathbb{E}\left[ (\sigma^{2})^{(K)} \right] = \sum_{l}\lambda_l^2\left| c_l^{(0)} \right|^2 - \left(\frac{1}{2\pi} \right)^{NK} \int_{0}^{2\pi}\cdots\int_{0}^{2\pi}  \frac{\left(\sum_{j} \left|c_j\right|^2\lambda_j \prod_{k=1}^{K}\left(1-\cos \varphi^{(k)}_{j}\right)\right)^2}{\sum_{j} \left|c_j\right|^2 \prod_{k=1}^{K} \left(1-\cos \varphi^{(k)}_{j}\right)} d\varphi_{1}^{(1)}\cdots d\varphi_{N}^{(K)}.
\end{equation}
\section{Two-device filter}

The setup of two-device filter has been shown in Fig.~\ref{intro}(b) and Algorithm~\ref{algo1}. Below we analyze the behavior of two-device filter under no, \emph{weak} and \emph{strong} postselection, respectively.

\subsection{Energy}\label{appx-dist-energy}

\subsubsection{No postselection}

In the case where we do not postselect anything and let the two-device protocol run until $K$ iterations, at each iteration $k$, we have multiple possible operations (Kraus operators) that can be applied on the state (we omit the iteration notation $^{(k)}$ for now):
\begin{equation}
    \begin{split}
        \mathbf{E}_{0}^{00}=&\frac{1}{4}(I+U)\otimes (I+U) = \frac{1}{4}\sum_{j,j'}(1+e^{i\varphi_{j}})(1+e^{i\varphi_{j'}})\ket{\phi_{j}\phi_{j'}}\bra{\phi_{j}\phi_{j'}}; \\
         \mathbf{E}_{0}^{01}=&\frac{1}{4}(I\otimes I-U\otimes U) = \frac{1}{4}\sum_{j,j'}(1-e^{i(\varphi_{j}+\varphi_{j'})})\ket{\phi_{j}\phi_{j'}}\bra{\phi_{j}\phi_{j'}} =\mathbf{E}_{0}^{10};\\
         \mathbf{E}_{0}^{11}=&\frac{1}{4}(I-U)\otimes (I-U)=\frac{1}{4}\sum_{j,j'}(1-e^{i\varphi_{j}})(1-e^{i\varphi_{j'}})\ket{\phi_{j}\phi_{j'}}\bra{\phi_{j}\phi_{j'}}; \\
         \mathbf{E}_{1}^{01}=&\frac{1}{4}(U\otimes I-I\otimes U)= \frac{1}{4}\sum_{j,j'}(e^{i\varphi_{j}}-e^{i\varphi_{j'}})\ket{\phi_{j}\phi_{j'}}\bra{\phi_{j}\phi_{j'}} =-\mathbf{E}_{1}^{10}; \\
         \mathbf{E}_{1}^{00}=&0=\mathbf{E}_{1}^{11}.
    \end{split}
\end{equation}
where the subscript is the measurement result of top auxiliary qubit and the superscript denotes the other two. One can easily check that:
\begin{equation}
    \sum_{j,k,l=0}^{1} {\mathbf{E}_{j}^{kl}}^{\dagger}{\mathbf{E}_{j}^{kl}} = I.
\end{equation}
For energy, similarly, we consider the quantum channel. Initially, the state is:
\begin{equation}
    \rho^{(0)} = \ket{\psi^{(0)}\psi^{(0)}}\bra{\psi^{(0)}\psi^{(0)}}.
\end{equation}
Once we apply the quantum channel with Kraus operators $\mathcal{E}$, we have the state after one round of the protocol as:
\begin{equation}
\begin{split}
    \rho^{(1)} =& \sum_{j,k,l=0}^{1} \mathbf{E}_{j}^{kl}\rho^{(0)}{\mathbf{E}_{j}^{kl}}^{\dagger} \\
    =& \frac{1}{4} \left( \rho^{(0)} + (I\otimes U) \rho^{(0)} (I\otimes U^{\dagger}) + (U \otimes I) \rho^{(0)} (U^{\dagger} \otimes I) +  (U\otimes U) \rho^{(0)} (U^{\dagger}\otimes U^{\dagger})   \right).
\end{split}
\end{equation}
Then we can define our new equivalent channel with Kraus operators:
\begin{equation}
\begin{split}
    \mathcal{K}_1 =& I \otimes I = \sum_{j,j'} \ket{\phi_j \phi_{j'}}\bra{\phi_j \phi_{j'}}; \\
    \mathcal{K}_2 =& I \otimes U = \sum_{j,j'} e^{i\varphi_j} \ket{\phi_j \phi_{j'}}\bra{\phi_j \phi_{j'}}; \\
    \mathcal{K}_3 =& U \otimes I = \sum_{j,j'} e^{i\varphi_{j'}} \ket{\phi_j \phi_{j'}}\bra{\phi_j \phi_{j'}}; \\
    \mathcal{K}_4 =& U \otimes U = \sum_{j,j'} e^{i(\varphi_j+\varphi_{j'})} \ket{\phi_j \phi_{j'}}\bra{\phi_j \phi_{j'}}. \\
\end{split}
\end{equation}
Therefore, we have:
\begin{equation}
    \rho^{(1)} = \frac{1}{4} \sum_{l,l',h,h'}\left(1+e^{i(\varphi_{l}-\varphi_{h})}\right)\left(1+e^{i(\varphi_{l'}-\varphi_{h'})}\right)c_l c_{l'} c_h^{*} c_{h'}^{*} \ket{\phi_{l}\phi_{l'}}\bra{\phi_{h}\phi_{h'}}.
\end{equation}
Here we also denote $c$ as the input population and omit the iteration notation $^{(k)}$. Therefore, after $K$ rounds, we have (adding the superscripts on $\varphi$ to denote the different time evolution for each iteration $k$):
\begin{equation}
    \rho^{(K)}= \frac{1}{4^K} \sum_{l,l',h,h'}\prod_{k=1}^{K}\left(1+e^{i(\varphi^{(k)}_{l}-\varphi^{(k)}_{h})}\right)\left(1+e^{i(\varphi^{(k)}_{l'}-\varphi^{(k)}_{h'})}\right)c_l c_{l'} c_h^{*} c_{h'}^{*} \ket{\phi_{l}\phi_{l'}}\bra{\phi_{h}\phi_{h'}}.
\end{equation}
Now, we trace out one of the system (e.g., Bob) and preserve one (e.g., Alice), we have:
\begin{equation}
    \rho_{A}^{(K)} = \Tr_{B}\left( \rho^{(K)} \right) =\frac{1}{2^K} \sum_{l,h} \sum_{\alpha \in \mathcal{P}(\{1,\cdots,K\})} e^{i\sum_{a\in\alpha}(\varphi_l^{(a)}-\varphi_h^{(a)})}c_l c_h^{*} \ket{\varphi_l}\bra{\varphi_h}.
\end{equation}
Then, the average energy becomes:
\begin{equation}
\begin{split}
\mathbb{E}\left[E^{(K)}\right] =& \mathbb{E}\left[E^{(K)} \Big| \mathbf{t} \right] = \Tr \left( H\rho_{A}^{(K)}  \right)   \\
=& \frac{1}{2^K} \Tr \left( \sum_{j}\lambda_j \ket{\phi_j}\bra{\phi_j} \sum_{l,h} \sum_{\alpha \in \mathcal{P}(1,\cdots,K)} e^{i\sum_{a\in\alpha}(\varphi_l^{(a)}-\varphi_h^{(a)})}c_l c_h^{*} \ket{\varphi_l}\bra{\varphi_h}  \right) \\
=& \frac{1}{2^K} \Tr \left( \sum_{j,l,h} \sum_{\alpha \in \mathcal{P}(1,\cdots,K)} \lambda_j e^{i\sum_{a\in\alpha}(\varphi_l^{(a)}-\varphi_h^{(a)})} c_l c_h^{*} \delta(h,j)\delta(l,j)  \right) \\
=& \frac{1}{2^K} \Tr \left(  \sum_{j} \lambda_j 2^K \left| c_j  \right|^2  \right) = \sum_{j} \lambda_j\left| c_j  \right|^2.
\end{split}
\end{equation}
This concludes that in the case where there is no postselection, the average energy will stay the same as the energy of the initial input state $\ket{\psi^{(0)}}$.

\subsubsection{\emph{Weak} postselection} \label{app:ssec:weak}

Now we throw away the cases with $\ket{1}$ on the top auxiliary qubit. In this case, after one iteration of the protocol, we have the density matrix:
\begin{equation}
\rho^{(1)} = \frac{1}{\Tr(\cdot)}\left(\rho^{(0)} + \frac{I \otimes U + U \otimes I}{\sqrt{2}}\rho^{(0)}\frac{I \otimes U^{\dagger} + U^{\dagger} \otimes I}{\sqrt{2}} + U\otimes U  \rho^{(0)}U^{\dagger}\otimes U^{\dagger}  \right),
\end{equation}
where $\Tr(\cdot)$ is the trace normalization term to make sure $\Tr(\rho)=1$.

We define:
\begin{equation}
\begin{split}
    \mathcal{K}_1 =& I \otimes I; \\
    \mathcal{K}_2 =& \frac{I \otimes U + U \otimes I}{\sqrt{2}}; \\
    \mathcal{K}_3 =& U \otimes U. \\
\end{split}
\end{equation}
Therefore, after $K$ rounds, we have (note that $\mathcal{K}$ changes for each iteration due to different time evolution $t_k$):
\begin{equation}
\begin{split}
&\rho^{(K)} \\
=&\frac{1}{\Tr(\cdot)}\sum_{j,j',h,h'}\prod_{k=1}^{K}\left( 1+\frac{\left( e^{i\varphi_{j}^{(k)}}+e^{i\varphi_{j'}^{(k)}} \right)\left( e^{-i\varphi_{h}^{(k)}}+e^{-i\varphi_{h'}^{(k)}} \right)}{2} + e^{i\varphi_{j}^{(k)}}e^{i\varphi_{j'}^{(k)}} e^{-i\varphi_{h}^{(k)}}e^{-i\varphi_{h'}^{(k)}} \right) \\
&c_j c_{j'} c_h^{*} c_{h'}^{*}\ket{\phi_j\phi_{j'}}\bra{\phi_h\phi_{h'}}.
\end{split}
\end{equation}
Similarly, we trace out one of the parties and then we have:
\begin{equation}
    \rho_{A}^{(K)} = \frac{\sum_{j,j',h}\prod_{k=1}^{K}\left( 1+\frac{\left( e^{i\varphi_{j}^{(k)}}+e^{i\varphi_{j'}^{(k)}} \right)\left( e^{-i\varphi_{h}^{(k)}}+e^{-i\varphi_{j'}^{(k)}} \right)}{2} + e^{i\varphi_{j}^{(k)}} e^{-i\varphi_{h}^{(k)}} \right)c_j \left|c_{j'}\right|^2 c_h^{*} \ket{\phi_j}\bra{\phi_h}}{\sum_{j,j'}\prod_{k=1}^{K}\left( 3+ \cos{t_k(\lambda_j-\lambda_{j'})}  \right)\left|c_{j}\right|^2 \left|c_{j'}\right|^2  }.
\end{equation}
Therefore:
\begin{equation}
    E^{(K)} = \Tr\left( H\rho_A^{(K)} \right) =\frac{\sum_{j,j'}\lambda_j\prod_{k=1}^{K}\left( 1+\frac{1}{3}\cos(\varphi_j^{(k)}-\varphi_{j'}^{(k)})  \right)\left|c_{j}\right|^2 \left|c_{j'}\right|^2}{\sum_{j,j'}\prod_{k=1}^{K}\left( 1+\frac{1}{3}\cos(\varphi_j^{(k)}-\varphi_{j'}^{(k)})  \right)\left|c_{j}\right|^2 \left|c_{j'}\right|^2}.
\end{equation}
and:
\begin{equation}
\mathbb{E}\left[ E^{(K)} \right] = \left(\frac{1}{2\pi}\right)^{NK} \int_{0}^{2\pi} \cdots \int_{0}^{2\pi} \frac{\sum_{j,j'}\lambda_j\prod_{k=1}^{K}\left( 1+\frac{1}{3}\cos(\varphi_j^{(k)}-\varphi_{j'}^{(k)})  \right)\left|c_{j}\right|^2 \left|c_{j'}\right|^2}{\sum_{j,j'}\prod_{k=1}^{K}\left( 1+\frac{1}{3}\cos(\varphi_j^{(k)}-\varphi_{j'}^{(k)})  \right)\left|c_{j}\right|^2 \left|c_{j'}\right|^2} d\varphi_{1}^{(1)}\cdots d\varphi_{N}^{(K)}.
\label{integral_E_weak}
\end{equation}
This integral is also hard to be solved analytically. However, let us make the following observation:
\begin{proposition}
\label{prop_E_weak}
    Eq.~\eqref{integral_E_weak} can be approximated as:
    \begin{equation}
    \begin{split}
        &\left(\frac{1}{2\pi}\right)^{NK} \int_{0}^{2\pi} \cdots \int_{0}^{2\pi} \frac{\sum_{j,j'}\lambda_j\prod_{k=1}^{K}\left( 1+\frac{1}{3}\cos(\varphi_j^{(k)}-\varphi_{j'}^{(k)})  \right)\left|c_{j}\right|^2 \left|c_{j'}\right|^2}{\sum_{j,j'}\prod_{k=1}^{K}\left( 1+\frac{1}{3}\cos(\varphi_j^{(k)}-\varphi_{j'}^{(k)})  \right)\left|c_{j}\right|^2 \left|c_{j'}\right|^2} d\varphi_{1}^{(1)}\cdots d\varphi_{N}^{(K)} \\
        \approx & \frac{\left(\frac{1}{2\pi}\right)^{NK} \int_{0}^{2\pi} \cdots \int_{0}^{2\pi} \sum_{j,j'}\lambda_j\prod_{k=1}^{K}\left( 1+\frac{1}{3}\cos(\varphi_j^{(k)}-\varphi_{j'}^{(k)})  \right)\left|c_{j}\right|^2 \left|c_{j'}\right|^2d\varphi_{1}^{(1)}\cdots d\varphi_{N}^{(K)}}{\left(\frac{1}{2\pi}\right)^{NK} \int_{0}^{2\pi} \cdots \int_{0}^{2\pi}\sum_{j,j'}\prod_{k=1}^{K}\left( 1+\frac{1}{3}\cos(\varphi_j^{(k)}-\varphi_{j'}^{(k)})  \right)\left|c_{j}\right|^2 \left|c_{j'}\right|^2d\varphi_{1}^{(1)}\cdots d\varphi_{N}^{(K)}} \\
    =& \frac{\mathbb{E}\left[ \sum_{j,j'}\lambda_j\prod_{k=1}^{K}\left( 1+\frac{1}{3}\cos(\varphi_j^{(k)}-\varphi_{j'}^{(k)})  \right)\left|c_{j}\right|^2 \left|c_{j'}\right|^2 \right]}{\mathbb{E}\left[ \sum_{j,j'}\prod_{k=1}^{K}\left( 1+\frac{1}{3}\cos(\varphi_j^{(k)}-\varphi_{j'}^{(k)})  \right)\left|c_{j}\right|^2 \left|c_{j'}\right|^2 \right]} \\
    =& \frac{\left(\left(\frac{4}{3}\right)^K-1\right)\sum_{j}\lambda_j\left| c_j \right|^4+\sum_{j}\lambda_j\left| c_j \right|^2}{\left(\left(\frac{4}{3}\right)^K-1\right)\sum_{j}\left| c_j \right|^4+1}.
    \end{split}
    \end{equation}
\end{proposition}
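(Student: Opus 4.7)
The plan is to split the claim into two independent pieces: (i) the ratio-of-expectations approximation $\mathbb{E}[X/Y]\approx\mathbb{E}[X]/\mathbb{E}[Y]$ applied to the numerator $X$ and denominator $Y$ of the integrand, and (ii) an explicit evaluation of $\mathbb{E}[X]$ and $\mathbb{E}[Y]$ using the independence of the phase variables across iterations.

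For step (i), I would argue as follows. By Proposition~\ref{prop1}, in the large-$T$ limit the phases $\{\varphi_j^{(k)}\}$ can be treated as i.i.d.~uniform on $[0,2\pi)$, so $Y$ is a sum over $j,j'$ of products of $K$ independent bounded factors with mean $1$ (for $j\neq j'$) or $4/3$ (for $j=j'$). The denominator is therefore a positive random variable whose relative fluctuations decrease as $K$ grows because its $K$ independent factors concentrate. Writing $Y=\mathbb{E}[Y](1+\delta)$ with $\delta$ small on typical configurations and expanding $1/Y=(1/\mathbb{E}[Y])\sum_{n\geq 0}(-\delta)^n$, the leading term gives $\mathbb{E}[X/Y]\approx\mathbb{E}[X]/\mathbb{E}[Y]$ up to corrections controlled by the variance of $Y$. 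This is the step I expect to be the main obstacle, since a fully rigorous bound would require quantifying the concentration of $Y$ and the correlation between $X$ and $Y$; in line with the paper's use of the word \emph{approximation}, I would present this as a self-averaging heuristic and leave the quantitative control as an aside.

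For step (ii), I would use independence of the phases across different iterations $k$ to factorize
\begin{equation}
\mathbb{E}\!\left[\prod_{k=1}^{K}\!\left(1+\tfrac{1}{3}\cos(\varphi_j^{(k)}-\varphi_{j'}^{(k)})\right)\right]=\prod_{k=1}^{K}\mathbb{E}\!\left[1+\tfrac{1}{3}\cos(\varphi_j^{(k)}-\varphi_{j'}^{(k)})\right].
\end{equation}
For $j=j'$ each factor equals $4/3$, giving $(4/3)^K$; for $j\neq j'$ the difference $\varphi_j^{(k)}-\varphi_{j'}^{(k)}\Mod{2\pi}$ is uniform on $[0,2\pi)$, so the cosine has zero mean and each factor equals $1$. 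The expected numerator and denominator therefore split into a diagonal and an off-diagonal contribution.

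Finally, I would collect terms. Writing $\sum_{j\neq j'}|c_j|^2|c_{j'}|^2=1-\sum_j|c_j|^4$ and similarly $\sum_{j\neq j'}\lambda_j|c_j|^2|c_{j'}|^2=\sum_j\lambda_j|c_j|^2-\sum_j\lambda_j|c_j|^4$, one obtains
\begin{equation}
\mathbb{E}[X]=\Big(\tfrac{4}{3}\Big)^{\!K}\!\sum_j\lambda_j|c_j|^4+\sum_j\lambda_j|c_j|^2-\sum_j\lambda_j|c_j|^4,
\end{equation}
\begin{equation}
\mathbb{E}[Y]=\Big(\tfrac{4}{3}\Big)^{\!K}\!\sum_j|c_j|^4+1-\sum_j|c_j|^4,
\end{equation}
and regrouping the $\sum_j\lambda_j|c_j|^4$ and $\sum_j|c_j|^4$ terms with the factor $((4/3)^K-1)$ yields exactly the closed-form ratio stated in the proposition. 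Correctness of the final formula is then just bookkeeping; the only substantive input beyond elementary integration is the approximation and the independence consequence of Proposition~\ref{prop1}.
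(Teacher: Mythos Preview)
Your approach matches the paper's: justify $\mathbb{E}[X/Y]\approx\mathbb{E}[X]/\mathbb{E}[Y]$ via concentration of the denominator, then evaluate the two expectations by factorizing over iterations and splitting diagonal versus off-diagonal $(j,j')$ contributions. Step~(ii) and the final bookkeeping are exactly what the paper does.

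The one substantive difference is in step~(i). The paper makes the concentration quantitative by computing the variance of the denominator explicitly, obtaining
\[
\sigma^2[Y]=2\Bigl(\bigl(\tfrac{19}{18}\bigr)^{K}-1\Bigr)\Bigl(\bigl(\textstyle\sum_j|c_j|^4\bigr)^2-\textstyle\sum_j|c_j|^8\Bigr),
\]
and comparing this to $\mathbb{E}[Y]\sim(4/3)^K$. Your heuristic that the ``$K$ independent factors concentrate'' is not the correct mechanism: for $j\neq j'$ the product of $K$ i.i.d.\ factors with mean $1$ and nonzero variance does \emph{not} concentrate as $K$ grows---its variance increases. What actually makes the approximation work is that the diagonal terms $j=j'$ are deterministic, contributing $(4/3)^K\sum_j|c_j|^4$ to the mean, and this outpaces the $\sim(19/18)^{K/2}$ growth of the off-diagonal fluctuations. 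Your Taylor-expansion framing $Y=\mathbb{E}[Y](1+\delta)$ is fine, but the smallness of $\delta$ comes from this diagonal-dominance, not from any self-averaging of the random factors themselves. Replacing your concentration sentence with the paper's explicit variance bound (or at least the correct intuition) would close the only soft spot in your argument.
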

\begin{proof}
    Let's denote:
    \begin{equation}
        g(\boldsymbol{\varphi}) = \sum_{j,j'}\prod_{k=1}^{K}\left( 1+\frac{1}{3}\cos(\varphi_j^{(k)}-\varphi_{j'}^{(k)})  \right)\left|c_{j}\right|^2 \left|c_{j'}\right|^2.
    \end{equation}
    In order to make the approximation accurate enough, we are going to show that the variance of $g(\boldsymbol{\varphi})$:
    \begin{equation}
        \sigma^2\left[ g(\boldsymbol{\varphi}) \right] = \mathbb{E}\left[ g^2(\boldsymbol{\varphi}) \right] - \left(\mathbb{E}\left[ g(\boldsymbol{\varphi}) \right]\right)^2,
    \end{equation}
    is sufficiently small compared to $\mathbb{E}\left[ g(\boldsymbol{\varphi}) \right]$. Since:
    \begin{equation}
    \mathbb{E}\left[ g(\boldsymbol{\varphi})  \right] = \left(\left(\frac{4}{3}\right)^K-1\right)\sum_{j}\left| c_j \right|^4+1,
    \end{equation}
    and:
    \begin{equation}
    \begin{split}
\mathbb{E}\left[ g^2(\boldsymbol{\varphi})  \right] =& \left(\left( \frac{4}{3} \right)^K \sum_{j} \left| c^{(0)}_j \right|^4\right)^2 + 2\left( \frac{4}{3} \right)^K \left(\sum_{j} \left| c^{(0)}_j \right|^4\right) \times \left(\sum_{j \neq j'} \left| c_j \right|^2 \left| c_{j'} \right|^2\right) + \\
&\sum_{\substack{
    \{j\},\{j'\},\{l\},\{l'\}\text{ and} \\
    \{j,l\},\{j'\},\{l'\}\text{ and} \\
    \{j,l'\},\{j'\},\{l\}\text{ and} \\
    \{j',l\},\{j\},\{l'\}\text{ and} \\
    \{j',l'\},\{j\},\{l\}}}
    \left| c_j \right|^2 \left| c_{j'} \right|^2\left| c_l \right|^2 \left| c_{l'} \right|^2 + \left(\frac{19}{18}\right)^K \sum_{\substack{\{j,l\},\{j',l'\}\text{ and} \\ \{j,l'\}, \{j',l\} }} \left| c_j \right|^2 \left| c_{j'} \right|^2 \left| c_l \right|^2 \left| c_{l'} \right|^2, 
\end{split}
\end{equation}
where the indices inside each braces are equal to each other, and are completely distinct with indices in other braces. For example, $\{j,l\},\{j'\},\{l'\}$ denotes $j=l\neq j' \neq l' \neq j=l$. Then, we have:
\begin{equation}
\begin{split}
\sigma^2\left[g(\boldsymbol{\varphi})\right] =& \mathbb{E}\left[ g^2(\boldsymbol{\varphi})  \right] - \mathbb{E}^2\left[ g(\boldsymbol{\varphi})  \right] \\
=& 2 \left(\left(\frac{19}{18}\right)^K-1\right) \left( \left(\sum_{j} \left| c_j \right|^4\right)^2-\sum_{j} \left| c_j \right|^8   \right) \\
\ll& \left(\left(\frac{4}{3}\right)^K-1\right)\sum_{j}\left| c_j \right|^4+1 = \mathbb{E}\left[ g(\boldsymbol{\varphi})  \right].
\end{split}
\end{equation}
Therefore, we approximate this integral of quotient into quotient of integrals, which does not hold in general but holds for this specific case (or any other similar integral forms).
\end{proof}

Therefore, the average energy becomes:
\begin{equation}
\mathbb{E}\left[ E^{(K)} \right] \approx \frac{\left(\left(\frac{4}{3}\right)^K-1\right)\sum_{j}\lambda_j\left| c_j \right|^4+\sum_{j}\lambda_j\left| c_j \right|^2}{\left(\left(\frac{4}{3}\right)^K-1\right)\sum_{j}\left| c_j \right|^4+1},
\end{equation}
which introduces energy biases to the system. Moreover, the biases are bounded, as $K\rightarrow\infty$, we have:
\begin{equation}
    \mathbb{E}\left[ E^{(\infty)} \right] \rightarrow \left(\sum_j \lambda_j \left| c_j^{(0)} \right|^4\right) \Big/ \left(\sum_j \left| c_j^{(0)} \right|^4\right).
\end{equation}

We can further analyze cases where the energy distribution of the Hamiltonian $H$ is continuous, extending the applicability to a broader range of scenarios. 
We assume $H$ is traceless and local. This kind of $H$ has Gaussian density of states centered at zero~\cite{hartmann2005spectral}. Also, the input product state $\ket{\psi^{(0)}}$ has populations with Gaussian shape in the eigenbasis~\cite{rai2024matrix}. Therefore, we have the density of states:
\begin{equation}
    D(\lambda)=\frac{1}{\sqrt{2\pi\sigma^2}}\exp\left( -\frac{\lambda^2}{2\sigma^2} \right),
\end{equation}
and the populations:
\begin{equation}
    A(\lambda) = \mathbf{A}\exp\left( -\frac{(\lambda-\mu)^2}{2\xi^2} \right),
\end{equation}
where $\sigma^2$ and $\xi^2$ denote the variance of density of states and populations, respectively. $\mu$ is the mean of the Gaussian shape (not the mean of the overall populations) and $\mathbf{A}$ is the normalization factor. Firstly, we need to find out $\mathbf{A}$, as:
\begin{equation}
     \int_{-\infty}^{\infty} D(\lambda) A(\lambda) d\lambda = \mathbf{A}\int_{-\infty}^{\infty} \frac{1}{\sqrt{2\pi\sigma^2}}\exp\left( -\frac{\lambda^2}{2\sigma^2} \right) \exp\left( -\frac{(\lambda-\mu)^2}{2\xi^2} \right) d\lambda = 1.
\end{equation}
Therefore:
\begin{equation}
    \mathbf{A} = \frac{\exp\left( \frac{\mu^2}{2(\xi^2+\sigma^2)} \right) \sqrt{\xi^2+\sigma^2}}{\xi}.
\end{equation}
Then:
\begin{equation}
    \sum_{j} \left| c^{(0)}_j \right|^4 \approx \int_{-\infty}^{\infty} A^2(\lambda) D(\lambda) d\lambda = \frac{\exp\left(\frac{\mu^2}{\xi^2+\sigma^2}-\frac{\mu^2}{\xi^2+2\sigma^2}\right)(\xi^2+\sigma^2)}{\xi\sqrt{\xi^2+2\sigma^2}},
\label{eq:c4-continuous}
\end{equation}
\begin{equation}
    \sum_{j} \lambda_j \left| c^{(0)}_j \right|^4 \approx \int_{-\infty}^{\infty}\lambda A^2(\lambda) D(\lambda) d\lambda = \frac{2\exp\left(\frac{\mu^2}{\xi^2+\sigma^2}-\frac{\mu^2}{\xi^2+2\sigma^2}\right)\mu\sigma^2(\xi^2+\sigma^2)}{\xi(\xi^2+2\sigma^2)^{\frac{3}{2}}},
\end{equation}
\begin{equation}
    E^{(0)}=\sum_{j} \lambda_j \left| c^{(0)}_j \right|^2 \approx \int_{-\infty}^{\infty}\lambda A(\lambda) D(\lambda) d\lambda = \frac{\sigma^2}{\xi^2+\sigma^2}\mu.
\end{equation}
Therefore:
\begin{equation}
    \mathbb{E}\left[ E^{(K)} \right] = \mu\frac{ \sigma^2 \left( 1 + \frac{2 \left(-1 + \left(\frac{4}{3}\right)^K\right) e^{\frac{\mu^2 \sigma^2}{\xi^4 + 3 \xi^2 \sigma^2 + 2 \sigma^4}} \left(\xi^2 + \sigma^2\right)^2}{\xi \left(\xi^2 + 2 \sigma^2\right)^{3/2}} \right)}{\left(\xi^2 + \sigma^2\right) \left( 1 + \frac{\left(-1 + \left(\frac{4}{3}\right)^K\right) e^{\frac{\mu^2 \sigma^2}{\xi^4 + 3 \xi^2 \sigma^2 + 2 \sigma^4}} \left(\xi^2 + \sigma^2\right)}{\xi \sqrt{\xi^2 + 2 \sigma^2}} \right)}.
\end{equation}
The energy goes up for positive $\mu$ and goes down for negative $\mu$. Then, for $K\rightarrow \infty$, we have:
\begin{equation}
\mathbb{E}\left[ E^{(\infty)} \right] = \frac{\mu}{\frac{\xi^2}{2\sigma^2}+1}.
\end{equation}
Then the largest bias we can introduce is approximately:
\begin{equation}
    \left|E^{(0)}-\mathbb{E}\left[ E^{(\infty)} \right]\right| =\left|\frac{\mu}{\frac{\xi^2}{2\sigma^2}+1} -\frac{\mu}{\frac{\xi^2}{\sigma^2}+1}\right| = |\mu|\left( \frac{1}{\frac{\xi^2}{2\sigma^2}+1} -\frac{1}{\frac{\xi^2}{\sigma^2}+1} \right)\leq |\mu| \frac{1}{2\sqrt{2}+3} \approx 0.1716 |\mu|.
\end{equation}
And there is no energy bias if $\mu=0$.

\subsubsection{\emph{Strong} postselection}

Now, if we have \emph{strong} postselection, i.e., we only preserve the one with $\ket{000}$ and $\ket{011}$ on all three auxiliary qubits. We have:
\begin{equation}
    \rho^{(1)} = \frac{1}{\Tr(\cdot)} \left(  (I\otimes U + U\otimes I) \rho^{(0)} (I\otimes U^{\dagger} + U^{\dagger}\otimes I) + (I\otimes I + U\otimes U) \rho^{(0)} (I\otimes I + U^{\dagger}\otimes U^{\dagger}) \right).
\end{equation}
We define:
\begin{equation}
\begin{split}
    \mathcal{K}_1 &= I\otimes U + U\otimes I; \\
    \mathcal{K}_2 &= I\otimes I + U\otimes U.
\end{split}
\end{equation}
Therefore, after $K$ rounds, we have (note that $\mathcal{K}$ changes for each round due to different time evolution $t_k$):
\begin{equation}
\begin{split}
\rho^{(K)} = \frac{1}{\Tr(\cdot)} \sum_{j,j',h,h'} & \prod_{k=1}^{K} \left[ \left(e^{i\varphi_{j}^{(k)}}+e^{i\varphi_{j'}^{(k)}} \right)\left(e^{-i\varphi_{h}^{(k)}}+e^{-i\varphi_{h'}^{(k)}} \right) + (1+e^{i(\varphi_{j}^{(k)}+\varphi_{j'}^{(k)})})(1+e^{-i(\varphi_{h}^{(k)}+\varphi_{h'}^{(k)})}) \right] \\
&c_j c_{j'} c_h^{*} c_{h'}^{*}\ket{\phi_j \phi_{j'}} \bra{\phi_h \phi_{h'}}
\end{split}
\end{equation}
Once we trace out one of the systems, we have:
\begin{equation}
\small
\begin{split}
&\rho^{(K)}_{A} \\
=& \frac{\sum_{j,j',h}\prod_{k=1}^{K} \left[ \left(e^{i\varphi_{j}^{(k)}}+e^{i\varphi_{j'}^{(k)}} \right)\left(e^{-i\varphi_{h}^{(k)}}+e^{-i\varphi_{j'}^{(k)}} \right) + (1+e^{i(\varphi_{j}^{(k)}+\varphi_{j'}^{(k)})})(1+e^{-i(\varphi_{h}^{(k)}+\varphi_{j'}^{(k)})}) \right] c_j \left|c_{j'}\right|^2 c_h^{*} \ket{\phi_j} \bra{\phi_h}}{\sum_{j,j'}\prod_{k=1}^{K}\left[ \left(e^{i\varphi_{j}^{(k)}}+e^{i\varphi_{j'}^{(k)}} \right)\left(e^{-i\varphi_{j}^{(k)}}+e^{-i\varphi_{j'}^{(k)}} \right) + (1+e^{i(\varphi_{j}^{(k)}+\varphi_{j'}^{(k)})})(1+e^{-i(\varphi_{j}^{(k)}+\varphi_{j'}^{(k)})})\right]\left|c_{j}\right|^2 \left|c_{j'}\right|^2   }
\end{split}
\end{equation}
Therefore:
\begin{equation}
E^{(K)} = \Tr\left( H\rho_A^{(K)} \right) = \frac{\sum_{j,j'}\lambda_j\prod_{k=1}^{K}\left( 1+\cos \varphi_j^{(k)}\cos \varphi_{j'}^{(k)}  \right)\left|c_{j}\right|^2 \left|c_{j'}\right|^2}{\sum_{j,j'}\prod_{k=1}^{K}\left( 1+\cos \varphi_j^{(k)}\cos \varphi_{j'}^{(k)}  \right)\left|c_{j}\right|^2 \left|c_{j'}\right|^2}
\end{equation}
and:
\begin{equation}
\mathbb{E}\left[ E^{(K)} \right] = \left( \frac{1}{2\pi}  \right)^{NK}\int_{0}^{2\pi}\cdots\int_{0}^{2\pi} \frac{\sum_{j,j'}\lambda_j\prod_{k=1}^{K}\left( 1+\cos \varphi_j^{(k)}\cos \varphi_{j'}^{(k)}  \right)\left|c_{j}\right|^2 \left|c_{j'}\right|^2}{\sum_{j,j'}\prod_{k=1}^{K}\left( 1+\cos \varphi_j^{(k)}\cos \varphi_{j'}^{(k)}  \right)\left|c_{j}\right|^2 \left|c_{j'}\right|^2} d\varphi_1^{(1)}\cdots d\varphi_N^{(K)}
\label{integral_E_strong}
\end{equation}
This is still not analytically solvable. However, our numerical studies show that the approximation similar to Proposition~\ref{prop_E_weak} still holds for all $K\in\mathbb{Z}^{+}$:
\begin{conjecture}
\label{prop_E_strong}
Eq.~\eqref{integral_E_strong} can be approximated as:
\begin{equation}
\begin{split}
    &\left( \frac{1}{2\pi}  \right)^{NK}\int_{0}^{2\pi}\cdots\int_{0}^{2\pi} \frac{\sum_{j,j'}\lambda_j\prod_{k=1}^{K}\left( 1+\cos \varphi_j^{(k)}\cos \varphi_{j'}^{(k)}  \right)\left|c_{j}\right|^2 \left|c_{j'}\right|^2}{\sum_{j,j'}\prod_{k=1}^{K}\left( 1+\cos \varphi_j^{(k)}\cos \varphi_{j'}^{(k)}  \right)\left|c_{j}\right|^2 \left|c_{j'}\right|^2} d\varphi_1^{(1)}\cdots d\varphi_N^{(K)} \\
   \approx&\frac{\left( \frac{1}{2\pi}  \right)^{NK}  \int_{0}^{2\pi}\cdots\int_{0}^{2\pi}  \sum_{j,j'}\lambda_j\prod_{k=1}^{K}\left( 1+\cos \varphi_j^{(k)}\cos \varphi_{j'}^{(k)}  \right)\left|c_{j}\right|^2 \left|c_{j'}\right|^2  d\varphi_1^{(1)}\cdots d\varphi_N^{(K)}}{\left( \frac{1}{2\pi}  \right)^{NK} \int_{0}^{2\pi}\cdots\int_{0}^{2\pi}   \sum_{j,j'}\prod_{k=1}^{K}\left( 1+\cos \varphi_j^{(k)}\cos \varphi_{j'}^{(k)}  \right)\left|c_{j}\right|^2 \left|c_{j'}\right|^2 d\varphi_1^{(1)}\cdots d\varphi_N^{(K)}} \\
   =&\frac{\mathbb{E} \left[ \sum_{j,j'}\lambda_j\prod_{k=1}^{K}\left( 1+\cos \varphi_j^{(k)}\cos \varphi_{j'}^{(k)}  \right)\left|c_{j}\right|^2 \left|c_{j'}\right|^2 \right]}{\mathbb{E} \left[ \sum_{j,j'}\prod_{k=1}^{K}\left( 1+\cos \varphi_j^{(k)}\cos \varphi_{j'}^{(k)}  \right)\left|c_{j}\right|^2 \left|c_{j'}\right|^2 \right]} \\
   =&\frac{\left(\left(\frac{3}{2}\right)^K-1\right)\sum_{j}\lambda_j\left| c_j \right|^4+\sum_{j}\lambda_j\left| c_j \right|^2}{\left(\left(\frac{3}{2}\right)^K-1\right)\sum_{j}\left| c_j \right|^4+1}.
\end{split}
\end{equation}
\end{conjecture}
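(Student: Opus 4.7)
The plan is to mirror the proof of Proposition~\ref{prop_E_weak} for the strong-postselection analogue. The argument decomposes into (i) a direct moment computation establishing the claimed quotient-of-expectations expression, and (ii) a concentration argument for the denominator that justifies replacing the expectation of the quotient by the quotient of the expectations.

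For step (i), I would invoke Proposition~\ref{prop1} to treat the phases $\varphi_j^{(k)}$ as i.i.d.\ uniform on $[0,2\pi]$ over both $j$ and $k$. The per-round factor $1+\cos\varphi_j^{(k)}\cos\varphi_{j'}^{(k)}$ then has expectation $1+\mathbb{E}[\cos^2\varphi]=3/2$ when $j=j'$ and $1+\mathbb{E}[\cos\varphi_j]\mathbb{E}[\cos\varphi_{j'}]=1$ when $j\neq j'$. Independence across rounds factorises the product, and the sum over index pairs weighted by $|c_j|^2|c_{j'}|^2$ yields
\begin{equation}
\mathbb{E}\!\left[\sum_{j,j'}\prod_{k=1}^{K}\left(1+\cos\varphi_j^{(k)}\cos\varphi_{j'}^{(k)}\right)|c_j|^2|c_{j'}|^2\right]=\left(\left(\tfrac{3}{2}\right)^K-1\right)\sum_j|c_j|^4+1,
\end{equation}
with an exactly analogous derivation giving the numerator when an extra $\lambda_j$ is inserted on the diagonal. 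Their ratio is precisely the right-hand side of the conjecture.

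For step (ii), denote the strong-case denominator by $g_s(\boldsymbol{\varphi})$ and aim to bound $\mathrm{Var}[g_s]/\mathbb{E}[g_s]^2$. Expanding $g_s^2$ as a quadruple sum over $(j,j',l,l')$ requires computing the per-round expectation $\mathbb{E}[(1+\cos\varphi_j\cos\varphi_{j'})(1+\cos\varphi_l\cos\varphi_{l'})]$ in each equivalence class of the four indices. The only classes giving a per-round factor strictly above $1$ are: all four indices equal (factor $19/8$, from $\mathbb{E}[\cos^4\varphi]=3/8$), two disjoint diagonal pairs $j=j'\neq l=l'$ (factor $9/4$), $\{j,j'\}=\{l,l'\}$ with $j\neq j'$ in either of its two pairings (factor $5/4$ each), and the three-equal-one-distinct patterns (factor $3/2$); every other configuration contributes $1$ because the odd moments of $\cos\varphi$ vanish. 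Summing the weighted contributions and subtracting $\mathbb{E}[g_s]^2$ then gives $\mathrm{Var}[g_s]$ in closed form.

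The main obstacle, and presumably the reason the authors label this a conjecture rather than a proposition, is that $\mathbb{E}[(1+\cos^2\varphi)^2]=19/8$ strictly exceeds $(\mathbb{E}[1+\cos^2\varphi])^2=9/4$, so the variance acquires a term of order $((19/8)^K-(9/4)^K)\sum_j|c_j|^8=(9/4)^K((19/18)^K-1)\sum_j|c_j|^8$ that grows geometrically faster than $\mathbb{E}[g_s]^2\sim(9/4)^K(\sum_j|c_j|^4)^2$. By contrast, the weak-case bound is safe because $(19/18)<(16/9)$, which is exactly what makes Proposition~\ref{prop_E_weak} go through. To close the gap here, a rigorous proof must either restrict to states with $\sum_j|c_j|^8\ll(\sum_j|c_j|^4)^2$, as expected for generic product states in large systems (cf.~Eq.~\eqref{eq:c4-continuous}), in which case the naive variance bound is sufficient only on the regime $K=o(\log(S_2^2/S_4)/\log(19/18))$; or go beyond second-moment concentration and track the Taylor expansion $\mathbb{E}[N/D]=\mathbb{E}[N]/\mathbb{E}[D]-\mathrm{Cov}(N,D)/\mathbb{E}[D]^2+\mathbb{E}[N]\,\mathrm{Var}[D]/\mathbb{E}[D]^3+\cdots$ and verify that the dangerous $(19/8)^K\sum_j|c_j|^8$ and $(19/8)^K\sum_j\lambda_j|c_j|^8$ contributions cancel between the covariance and variance corrections in a self-consistent manner. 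The numerical agreement reported in the excerpt hints strongly at such a cancellation, but a clean analytic proof valid for arbitrary $K$ and arbitrary $\{c_j\}$ appears delicate.
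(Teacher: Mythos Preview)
Your analysis is essentially identical to the paper's own discussion: the paper also labels this a \emph{conjecture}, explicitly states that ``the similar proof of Proposition~\ref{prop_E_weak} does not apply here,'' and computes the same variance obstruction (the $(19/8)^K\sum_j|c_j|^8$ term outpacing the $(9/4)^K$ scale of $\mathbb{E}[g_s]^2$) as the reason. Neither you nor the authors prove the statement; your proposed avenues for closing the gap (restricting to states with $\sum_j|c_j|^8\ll(\sum_j|c_j|^4)^2$, or tracking higher-order covariance/variance cancellations in the Taylor expansion of $\mathbb{E}[N/D]$) go modestly beyond what the paper offers, which simply records numerical agreement with Fig.~\ref{energy} and leaves the analytic question open.
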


This conjecture is numerically tested and the approximation matches with the data shown in Fig.~\ref{energy}. 
However, the similar proof of Proposition~\ref{prop_E_weak} does not apply here.
The reason is as follows.
We let:
    \begin{equation}
        g(\boldsymbol{\varphi}) = \left( \frac{1}{2\pi}  \right)^{NK} \int_{0}^{2\pi}\cdots\int_{0}^{2\pi}   \sum_{j,j'}\prod_{k=1}^{K}\left( 1+\cos \varphi_j^{(k)}\cos \varphi_{j'}^{(k)}  \right)\left|c_{j}\right|^2 \left|c_{j'}\right|^2 d\varphi_1^{(1)}\cdots d\varphi_N^{(K)}.
    \end{equation}
Since:
\begin{equation}
    \mathbb{E}\left[ g(\boldsymbol{\varphi}) \right] = \left(\left(\frac{3}{2}\right)^K-1\right)\sum_{j}\left| c_j \right|^4+1,
\end{equation}
and:
\begin{equation}
\begin{split}
\mathbb{E}\left[ g^2(\boldsymbol{\varphi})  \right] =& \left(\frac{19}{8}\right)^K \sum_{j} \left| c_j  \right|^8 + \left(\frac{9}{4}\right)^K \sum_{j\neq j'}\left| c_j  \right|^4\left| c_j  \right|^4 + 2\left( \frac{3}{2} \right)^K \sum_{j,l\neq l'} \left| c_j \right|^4 \left| c_l \right|^2 \left| c_{l'} \right|^2 + \\
&\sum_{\substack{
    \{j\},\{j'\},\{l\},\{l'\}\text{ and} \\
    \{j,l\},\{j'\},\{l'\}\text{ and} \\
    \{j,l'\},\{j'\},\{l\}\text{ and} \\
    \{j',l\},\{j\},\{l'\}\text{ and} \\
    \{j',l'\},\{j\},\{l\}}}
    \left| c_j \right|^2 \left| c_{j'} \right|^2\left| c_l \right|^2 \left| c_{l'} \right|^2 + \left(\frac{5}{4}\right)^K \sum_{\substack{\{j,l\},\{j',l'\}\text{ and} \\ \{j,l'\}, \{j',l\} }} \left| c_j \right|^2 \left| c_{j'} \right|^2 \left| c_l \right|^2 \left| c_{l'} \right|^2. 
\end{split}
\end{equation}
Therefore:
\begin{equation}
\begin{split}
    \sigma^2\left[g(\boldsymbol{\varphi})\right] =& \mathbb{E}\left[ g^2(\boldsymbol{\varphi})  \right] - \mathbb{E}^2\left[ g(\boldsymbol{\varphi})  \right] \\
    =& 2\left( \left( \frac{5}{4} \right)^K-1 \right) \left(\sum_{j}\left| c_j \right|^4\right)^2 + \left(   \left(\frac{19}{8}\right)^K - \left(\frac{9}{4}\right)^K -2 \left(\frac{5}{4}\right)^K +2 \right)\sum_{j}\left| c_j \right|^8. 
\end{split}
\end{equation}
Note that $\sigma^2\left[g(\boldsymbol{\varphi})\right]\ll \mathbb{E}\left[ g^2(\boldsymbol{\varphi})  \right]$ only holds for small $K$. For large $K$, $\sigma^2\left[g(\boldsymbol{\varphi})\right]\gg \mathbb{E}\left[ g^2(\boldsymbol{\varphi})  \right]$. 
This is why a similar analysis of Proposition~\ref{prop_E_weak} does not apply here.
Moreover, direct samples of Eq.~\eqref{integral_E_strong} to compute the integral may have precision problem, as the number of random $\varphi$ to sample is $2^{NK}$, scaled exponentially for both $N$ and $K$.
Again, if this approximation holds, we can proceed to a similar analysis for continuous case, and get the same energy bias bound in the end.

\subsection{Energy Variance}\label{appendix:energy-variance}

In two-device case, we trace out one of the systems in the output state $\ket{\Psi^{(K)}}$ at iteration $K$ and we have:
\begin{equation}
    \mathbb{E}\left[ \left(\sigma^{2}\right)^{(K)}  \right] = \mathbb{E}\left[\bra{\Psi^{(K)}} I \otimes H^2\ket{\Psi^{(K)}}\right] - \mathbb{E}\left[\left( \bra{\Psi^{(K)}}I \otimes H\ket{\Psi^{(K)}} \right)^2\right].
\end{equation}
Let's analyze the first term first. We denote $\mathbb{E}_{1}$, $\mathbb{E}_{\mathrm{n}}$, $\mathbb{E}_{\mathrm{w}}$ and $\mathbb{E}_{\mathrm{s}}$ as the expected values for cases of single-device, two-device without postselection, with \emph{weak} postselection, and with \emph{strong} postselection, respectively. Therefore, from the previous analysis, and also the Proposition~\ref{prop_E_weak} and~\ref{prop_E_strong}, we have:
\begin{equation}
    \mathbb{E}_{1}\left[\bra{\psi^{(K)}}  H^2\ket{\psi^{(K)}}\right] = \mathbb{E}_{\mathrm{n}}\left[\bra{\Psi^{(K)}}  H^2\ket{\Psi^{(K)}}\right] =\sum_{j}\lambda^2_j\left| c_j \right|^2,
\end{equation}
\begin{equation}
    \mathbb{E}_{\mathrm{w}}\left[\bra{\Psi^{(K)}}  I\otimes H^2\ket{\Psi^{(K)}}\right] =\frac{\left(\left(\frac{4}{3}\right)^K-1\right)\sum_{j}\lambda^2_j\left| c_j \right|^4+\sum_{j}\lambda^2_j\left| c_j \right|^2}{\left(\left(\frac{4}{3}\right)^K-1\right)\sum_{j}\left| c_j \right|^4+1},
\end{equation}
\begin{equation}
    \mathbb{E}_{\mathrm{s}}\left[\bra{\Psi^{(K)}}  I\otimes H^2\ket{\Psi^{(K)}}\right] =\frac{\left(\left(\frac{3}{2}\right)^K-1\right)\sum_{j}\lambda^2_j\left| c_j \right|^4+\sum_{j}\lambda^2_j\left| c_j \right|^2}{\left(\left(\frac{3}{2}\right)^K-1\right)\sum_{j}\left| c_j \right|^4+1}.
\end{equation}
For the second term, we remind that:
\begin{equation}
    \mathbb{E}_{1}\left[\left(\bra{\psi^{(K)}}  H\ket{\psi^{(K)}}\right)^2\right] = \left(\frac{1}{2\pi} \right)^{N} \int_{0}^{2\pi}\cdots\int_{0}^{2\pi}  \frac{\left(\sum_{j} \left|c_j\right|^2\lambda_j \left(1-\cos \varphi_{j}\right)\right)^2}{\sum_{j} \left|c_j\right|^2  \left(1-\cos \varphi_{j}\right)} d\varphi_{1}\cdots d\varphi_{N}.
\end{equation}
For the two-device case, remind the Kraus operators we have in the no-postselection case:
\begin{equation}
    \begin{split}
        \mathbf{E}_{0}^{00}=&\frac{1}{4}(I+U)\otimes (I+U) = \frac{1}{4}\sum_{j,j'}(1+e^{i\varphi_{j}})(1+e^{i\varphi_{j'}})\ket{\phi_{j}\phi_{j'}}\bra{\phi_{j}\phi_{j'}}; \\
         \mathbf{E}_{0}^{01}=&\frac{1}{4}(I\otimes I-U\otimes U) = \frac{1}{4}\sum_{j,j'}(1-e^{i(\varphi_{j}+\varphi_{j'})})\ket{\phi_{j}\phi_{j'}}\bra{\phi_{j}\phi_{j'}} =\mathbf{E}_{0}^{10};\\
         \mathbf{E}_{0}^{11}=&\frac{1}{4}(I-U)\otimes (I-U)=\frac{1}{4}\sum_{j,j'}(1-e^{i\varphi_{j}})(1-e^{i\varphi_{j'}})\ket{\phi_{j}\phi_{j'}}\bra{\phi_{j}\phi_{j'}}; \\
         \mathbf{E}_{1}^{01}=&\frac{1}{4}(U\otimes I-I\otimes U)= \frac{1}{4}\sum_{j,j'}(e^{i\varphi_{j}}-e^{i\varphi_{j'}})\ket{\phi_{j}\phi_{j'}}\bra{\phi_{j}\phi_{j'}} =-\mathbf{E}_{1}^{10}; \\
         \mathbf{E}_{1}^{00}=&0=\mathbf{E}_{1}^{11}.
    \end{split}
\end{equation}
We can actually write them in a closed form:
\begin{equation}
    \mathbf{E}_{s}^{ab} = \frac{1}{8}\sum_{j,j'} (1+(-1)^a e^{i\varphi_j})(1+(-1)^b e^{i\varphi_{j'}}) + (-1)^s (1+(-1)^b e^{i\varphi_j})(1+(-1)^a e^{i\varphi_{j'}})\ket{\phi_j\phi_{j'}}\bra{\phi_j\phi_{j'}},
\end{equation}
where $a,b,s$ can be either 0 or 1, denoting the auxiliary qubit result $\ket{sab}$ for each round. Now, we apply $\mathbf{E}$ on the pure state $\ket{\psi^{(0)}\psi^{(0)}}$, the state becomes:
\begin{equation}
\begin{split}
    \ket{\Psi^{(1)}} =& \frac{1}{\sqrt{\bra{\psi^{(0)}\psi^{(0)}}(\mathbf{E}^{ab}_{s})^{\dagger}\mathbf{E}_{s}^{ab}\ket{\psi^{(0)}\psi^{(0)}}}} \mathbf{E}_{s}^{ab} \ket{\psi^{(0)}\psi^{(0)}} = \frac{1}{\sqrt{P_s^{ab}}}\mathbf{E}_{s}^{ab} \ket{\psi^{(0)}\psi^{(0)}} \\
    =& \frac{1}{\sqrt{P_{s}^{ab}}} \times \frac{1}{8}\sum_{j,j'} (1+(-1)^a e^{i\varphi_j})(1+(-1)^b e^{i\varphi_{j'}}) + (-1)^s (1+(-1)^b e^{i\varphi_j})(1+(-1)^a e^{i\varphi_{j'}}) \\
    &c_{j} c_{j'}\ket{\phi_j\phi_{j'}},
\end{split}
\label{eq:psi}
\end{equation}
where $P_{s}^{ab}$ denotes the probability of applying operation $\mathbf{E}_{s}^{ab}$ on the state $\ket{\psi^{(0)}\psi^{(0)}}$. Therefore:
\begin{equation}
\begin{split}
     \bra{\Psi^{(1)}}I\otimes H\ket{\Psi^{(1)}} = \frac{1}{P_{s}^{ab}}\frac{1}{64}\sum_{j,j'}\left| c_{j} \right|^2\left| c_{j'} \right|^2 \lambda_{j'} \Theta_{s}^{ab},
\end{split}
\end{equation}
where:
\begin{equation}
    \begin{split}
        \Theta_{0}^{00} =& 16(1+\cos \varphi_{j})(1+\cos \varphi_{j'}) = 64\cos^2\frac{\varphi_{j}}{2}\cos^2\frac{\varphi_{j'}}{2};\\
        \Theta^{01}_{0} =& 8(1-\cos(\varphi_{j}+\varphi_{j'})) = 16\sin^2 \frac{\varphi_{j}+\varphi_{j'}}{2}=\Theta^{10}_{0};\\
        \Theta^{11}_{0} =& 16(1-\cos \varphi_{j})(1-\cos \varphi_{j'}) = 64\sin^2\frac{\varphi_{j}}{2}\sin^2\frac{\varphi_{j'}}{2};\\
        \Theta^{01}_{1} =& 8(1-\cos(\varphi_{j}-\varphi_{j'})) = 16\sin^2 \frac{\varphi_{j}-\varphi_{j'}}{2}=\Theta^{10}_{1};\\
        \Theta^{00}_{1} =& 0 = \Theta^{11}_{1},
    \end{split}
    \label{eq:Theta}
\end{equation}
and:
\begin{equation}
    P_{s}^{ab} = \frac{1}{64}\sum_{j,j'}\left| c_{j} \right|^2\left| c_{j'} \right|^2 \Theta_{s}^{ab}.
\end{equation}
Extending it to $K$ rounds, we have:
\begin{equation}
    \bra{\Psi^{(K)}}I\otimes H \ket{\Psi^{(K)}} = \frac{1}{P^{\mathbf{ab}}_{\mathbf{s}}}\frac{1}{64^K}\sum_{j,j'}|c_j|^2|c_{j'}|^2\lambda_{j'}\prod_{k=1}^{K}\Theta^{a_k b_k}_{s_k},
\end{equation}
and:
\begin{equation}
    P^{\mathbf{ab}}_{\mathbf{s}}=\frac{1}{64^K}\sum_{j,j'}\left| c_{j} \right|^2\left| c_{j'} \right|^2 \prod_{k=1}^{K}\Theta^{a_k b_k}_{s_k},
\end{equation}
with the populations of $\ket{\phi_{j}\phi_{j'}}$:
\begin{equation}
|c^{(K)}_{jj'}|^2=\frac{1}{P^{\mathbf{ab}}_{\mathbf{s}}}\frac{1}{64^K}|c_j|^2|c_{j'}|^2\prod_{k=1}^{K}\Theta^{a_k b_k}_{s_k}.
\end{equation}
Notably, when we pick $\ket{sab}=\ket{000}$ or $\ket{011}$ only (in the \emph{strong} postselection criterion), the states on the devices keep the tensor product structure; that is, $c^{(k)}_{jj'}$ can be always written in $c_j^{(k)}c_{j'}^{(k)}$. This implies that the states of Alice and Bob are pure product states after each iteration of operation.
Then, under the no postselection case:
\begin{equation}
\begin{split}
    \mathbb{E}_{\mathrm{n}}\left[ \left( \bra{\Psi^{(K)}}I\otimes H \ket{\Psi^{(K)}} \right)^2 \right] =& \sum_{\mathbf{a},\mathbf{b},\mathbf{s}}\mathbb{E}\left[ P^{\mathbf{ab}}_{\mathbf{s}}\left( \frac{1}{P^{\mathbf{ab}}_{\mathbf{s}}}\frac{1}{64^K}\sum_{j,j'}|c_j|^2|c_{j'}|^2\lambda_{j'}\prod_{k=1}^{K}\Theta^{a_k b_k}_{s_k} \right)^2 \right] \\
    =& \frac{1}{64^{K}}\sum_{\mathbf{a},\mathbf{b},\mathbf{s}}\mathbb{E}\left[\frac{\left(  \sum_{j,j'}\left| c_{j} \right|^2\left| c_{j'} \right|^2 \lambda_{j'} \prod_{k=1}^{K}\Theta^{a_k b_k}_{s_k} \right)^2}{\sum_{j,j'}\left| c_{j} \right|^2\left| c_{j'} \right|^2  \prod_{k=1}^{K}\Theta^{a_k b_k}_{s_k}}\right].
\end{split}
\end{equation}
Then, for the \emph{weak} postselection case, a normalization on $P^{\mathbf{ab}}_{\mathbf{s}}$ is required. Therefore:
\begin{equation}
\begin{split}
    W^{\mathbf{a'b'}}=\frac{P^{\mathbf{a'b'}}_{\mathbf{0}}}{\sum_{\mathbf{a},\mathbf{b}}P^{\mathbf{ab}}_{\mathbf{0}}}=\frac{P^{\mathbf{a'b'}}_{\mathbf{0}}}{\frac{1}{4^K}\sum_{j,j'}|c_j|^2 |c_{j'}|^2 \prod_{k=1}^{K}(3+\cos(\varphi_j^{(k)}-\varphi_{j'}^{(k)}))},
\end{split}
\end{equation}
and:
\begin{equation}
\begin{split}
    &\mathbb{E}_{\mathrm{w}}\left[ \left( \bra{\Psi^{(K)}}I\otimes H \ket{\Psi^{(K)}} \right)^2 \right] \\
    =& \sum_{\mathbf{a},\mathbf{b}}\mathbb{E}\left[ W^{\mathbf{ab}}\left( \frac{1}{P^{\mathbf{ab}}_{\mathbf{0}}}\frac{1}{64^K}\sum_{j,j'}|c_j|^2|c_{j'}|^2\lambda_{j'}\prod_{k=1}^{K}\Theta^{a_k b_k}_{0} \right)^2 \right] \\
    =&\frac{1}{16^K}\sum_{\mathbf{a},\mathbf{b}}\mathbb{E}\left[\frac{\left(  \sum_{j,j'}\left| c_{j} \right|^2\left| c_{j'} \right|^2 \lambda_{j'} \prod_{k=1}^{K}\Theta^{a_k b_k}_{0} \right)^2}{\sum_{j,j'}\left| c_{j} \right|^2\left| c_{j'} \right|^2  \prod_{k=1}^{K}\Theta^{a_k b_k}_{0}} \left( \sum_{j,j'}|c_j|^2 |c_{j'}|^2 \prod_{k=1}^{K}(3+\cos(\varphi_j^{(k)}-\varphi_{j'}^{(k)}))\right)^{-1}\right].
\end{split}
\end{equation}
Similarly, for the \emph{strong} postselection, the normalization on $P^{\mathbf{ab}}_{\mathbf{s}}$ is:
\begin{equation}
    S^{a'_k=b'_k}=\frac{P^{a'_k=b'_k}_{\mathbf{0}}}{\sum_{a_k=b_k}P^{a_k=b_k}_{\mathbf{0}}}=\frac{P^{a'_k=b'_k}_{\mathbf{0}}}{\frac{1}{2^K}\sum_{j,j'}|c_j|^2|c_{j'}|^2\prod_{k=1}^{K}(1+\cos(\varphi_{j}^{(k)})\cos(\varphi_{j'}^{(k)}))}
\end{equation}
and:
\begin{equation}
\begin{split}
    &\mathbb{E}_{\mathrm{s}}\left[ \left( \bra{\Psi^{(K)}}I\otimes H \ket{\Psi^{(K)}} \right)^2 \right] \\
    =&\sum_{a_k=b_k}\mathbb{E}\left[ S^{a_k=b_k} \left(  \frac{1} {P^{a_k=b_k}_{\mathbf{0}}}\frac{1}{64^K}\sum_{j,j'}|c_j|^2|c_{j'}|^2\lambda_{j'}\prod_{k=1}^{K}\Theta^{a_k= b_k}_{0}  \right)^2 \right] \\
    =&\frac{1}{32^K}\sum_{a_k=b_k}\mathbb{E}\left[  \frac{\left(  \sum_{j,j'}\left| c_{j} \right|^2\left| c_{j'} \right|^2 \lambda_{j'} \prod_{k=1}^{K}\Theta^{a_k = b_k}_{0} \right)^2}{\sum_{j,j'}\left| c_{j} \right|^2\left| c_{j'} \right|^2  \prod_{k=1}^{K}\Theta^{a_k = b_k}_{0}} 
\left( \sum_{j,j'}|c_j|^2|c_{j'}|^2\prod_{k=1}^{K}(1+\cos(\varphi_{j}^{(k)})\cos(\varphi_{j'}^{(k)})) \right)^{-1} \right].
\end{split}
\end{equation}

These are the analytical expressions for the expectation value of $\left( \bra{\Psi^{(K)}}I\otimes H \ket{\Psi^{(K)}} \right)^2$, which are extremely difficult to compute explicitly. However, a similar line of analysis can be carried out by analogy with the single-device filtering process. As shown in Eq.~\eqref{eq:psi} and~\eqref{eq:Theta}, the overall state shared between Alice and Bob remains symmetric throughout the protocol; that is, the state is invariant under the exchange of indices $j$ and $j'$. Consequently, the reduced states of Alice and Bob are always identical. This implies that, asymptotically, the two-device eigenstate preparation protocol prepares two identical eigenstates, provided that cases with ancillary outcome $\ket{s} = \ket{1}$ are excluded. If such outcome occurs, the populations corresponding to the state $\ket{\phi_j} \otimes \ket{\phi_j}$ may vanish, possibly resulting in a final state composed of superpositions of $\ket{\phi_j} \otimes \ket{\phi_{j'}}$ and $\ket{\phi_{j'}} \otimes \ket{\phi_j}$ for $j \neq j'$.

To build intuition, we first focus on the \emph{strong} postselection case. In this scenario, we apply the same filtering factor $(1 \pm \cos \varphi_j^{(k)})$ to the populations of each party, which is the same the single-device case. Also, the global state remains separable, and the time evolutions on both devices are identical at each iteration. As a result, the dominant (unnormalized) population on two identical eigenstates is amplified as:
\begin{equation}
    |c^{(k)}_j|^2 \cdot |c^{(k)}_j|^2 = |c^{(k-1)}_j|^2 \times (1 \pm \cos \varphi_j^{(k)}) \cdot |c^{(k-1)}_j|^2 \times (1 \pm \cos \varphi_j^{(k)}),
\end{equation}
with the larger associated measurement probability given by:
\begin{equation}
\begin{split}
    \frac{P_{0}^{00 \text{ or }11}}{P_{0}^{00} + P_{0}^{11}} &= \frac{\frac{1}{4} \sum_{j,j'} |c^{(k-1)}_j|^2 |c^{(k-1)}_{j'}|^2 (1 \pm \cos \varphi^{(k)}_j)(1 \pm \cos \varphi^{(k)}_{j'})}{\frac{1}{2} \sum_{j,j'} |c^{(k-1)}_j|^2 |c^{(k-1)}_{j'}|^2 \left(1 + \cos \varphi^{(k)}_j \cos \varphi^{(k)}_{j'}\right)} \\
    &= \frac{\left(1 \pm \sum_{j} |c^{(k-1)}_j|^2 \cos \varphi^{(k)}_j \right)^2}{2 + 2\left( \sum_{j} |c^{(k-1)}_j|^2 \cos \varphi^{(k)}_j \right)^2},
\end{split}
\label{eq:magnify-two}
\end{equation}
when $\pm \cos \varphi_j^{(k)} > 0$. The signs $+$ and $-$ correspond to the postselection outcomes $P_0^{00}$ and $P_0^{11}$, respectively. Note that the resulting probability distribution differs from the single-device case. Since $|c_j^{(k-1)}|^2$ is the dominant population, it is likely that $\pm \sum_j |c^{(k-1)}_j|^2 \cos \varphi_j^{(k)} > 0$. In such cases, it is straightforward to verify that Eq.~\eqref{eq:magnify-two} yields a value larger than or equal to the single-device filtering probability,
\[
P_{0 \text{ or } 1} = \frac{1}{2} \pm \frac{1}{2} \sum_j |c_j^{(k-1)}|^2 \cos \varphi_j^{(k)},
\]
meaning that the probability of amplifying the dominant $|c_j^{(k-1)}|^2$ is higher in the two-device setting. This demonstrates that the two-device protocol with \emph{strong} postselection can outperform the single-device case, albeit at the cost of the exponential postselection overhead.

In the case of \emph{weak} postselection, the accepted outcomes include not only those from the \emph{strong} postselection (i.e., $\ket{sab} = \ket{000}$ and $\ket{011}$), but also additional measurement results such as $\ket{001}$ and $\ket{010}$. These outcomes can generally induce entanglement between Alice and Bob’s subsystems. Nevertheless, similar analytical arguments apply, and numerical evidence suggests that the \emph{weak} postselection still outperforms the single-device algorithm, while maintaining a bounded postselection overhead. As the energy variance decreases with successive successful iterations, the entanglement between the two parties gradually diminishes, and the overall state increasingly approximates a product of two identical eigenstates shared between Alice and Bob.

Note that the $P_{\mathbf{s}}^{\mathbf{ab}}$ is analytically calculable for both \emph{weak} and \emph{strong} postselection scenarios. In the \emph{weak} postselection, the cumulative success rate:
\begin{equation}
\begin{split}
    P_{\mathrm{w}} =& \frac{1}{64^K}\sum_{j,j'}|c^{(0)}_j|^2|c^{(0)}_{j'}|^2 \prod_{k=1}^{K} \frac{1}{4\pi^2}\int_{0}^{2\pi}\int_{0}^{2\pi}(64-32\sin^2\frac{\varphi^{(k)}_j - \varphi^{(k)}_{j'}}{2})d\varphi_{j}^{(k)}d\varphi_{j'}^{(k)} \\
    =& \sum_{j}|c^{(0)}_{j}|^4+\left(\frac{3}{4}\right)^K \sum_{j\neq j'} |c^{(0)}_j|^2 |c^{(0)}_{j'}|^2,
\end{split}
\end{equation}
which is lower-bounded by $\sum_{j}|c_{j}|^4$ with respect to $K$. This matches with the numerical result where after several iterations the cumulative success rate plateaus. Therefore, the lower-bound of $P_{\mathrm{w}}$ only depends on the populations of the initial state in the eigenbasis of the Hamiltonian $H$. In the case of product state as the initial state and local Hamiltonian, $\sum_{j}|c_{j}|^4$ can be then approximated in Eq.~\eqref{eq:c4-continuous} with $\sigma,\xi\sim O(\sqrt{n})$ where $n$ is the number of qubits of the prepared eigenstate~\cite{hartmann2005spectral, rai2024matrix}. Therefore, this lower-bound has scaling of $A\exp(B/n)$ where $A$ and $B$ are coefficients independent from $n$.  

On the other hand, for the \emph{strong} postselection case:
\begin{equation}
\begin{split}
    P_{\mathrm{s}}=&\frac{1}{64^K}\sum_{j,j'}|c^{(0)}_j|^2|c^{(0)}_{j'}|^2\prod_{k=1}^{K}\frac{1}{4\pi^2}\int_{0}^{2\pi}\int_{0}^{2\pi}(64\cos^2\frac{\varphi^{(k)}_j}{2}\cos^2\frac{\varphi^{(k)}_{j'}}{2}+64\sin^2\frac{\varphi^{(k)}_j}{2}\sin^2\frac{\varphi^{(k)}_{j'}}{2})d\varphi_j^{(k)}d\varphi_{j'}^{(k)} \\
    =& \left(\frac{3}{4}\right)^{K}\sum_{j}|c^{(0)}_j|^4+\left(\frac{1}{2}\right)^{K}\sum_{j \neq j'}|c^{(0)}_j|^2|c^{(0)}_{j'}|^2,
\end{split}
\end{equation}
which decreases exponentially with $K$.

\subsection{Eigenvalue spreading (variance of the mixed state)} \label{app:ssec:spread}

Throughout this work, we primarily discuss the average variance of the pure state ensemble. 
Specifically, for each repetition, we let the algorithm to run for several iterations, obtaining the output pure state (provided the algorithm does not restart due to postselection). We then compute the energy variance of this output state and store the result. This process is repeated, and the energy variance values are averaged across all repetitions. This averaged value is what we refer to as the average variance of the pure state ensemble, i.e.,
\begin{equation}
    \mathbb{E}\left[ (\sigma^{2})^{(k)} \right] =\mathbb{E}\left[\bra{\psi^{(k)}}H^2\ket{\psi^{(k)}}\right] - \mathbb{E}\left[\left( \bra{\psi^{(k)}}H\ket{\psi^{(k)}} \right)^2\right]=\Tr\left( \rho^{(k)}H^2 \right) - \mathbb{E}\left[ {E^{(k)}}^2 \right],
\end{equation}
where $E^{(k)}=\bra{\psi^{(k)}}H\ket{\psi^{(k)}}$ denotes the energy of pure state $\psi^{(k)}$ at $k$th iteration. 

Alternatively, we can consider an expected ensemble of pure states returned by the algorithm over many iterations, and consider the ensemble as a mixed state $\rho^{(k)}$. 
This mixed state has an associated energy $E^{(k)}=\Tr\left( \rho^{(k)}H \right)$. We can then define another variance, corresponding to the mixed state $\rho^{(k)}$, denoted as $V^{(k)}$:
\begin{equation}
    V^{(k)}=\Tr\left( \rho^{(k)}H^2 \right) - \mathbb{E}\left[ {E^{(k)}}^2 \right] = \Tr\left( \rho^{(k)}H^2 \right) - \left(\Tr(\rho^{(k)} H)\right)^2.
\end{equation}
We highlight that this is a different quantity from $\mathbb{E}\left[ (\sigma^{2})^{(k)} \right]$. We refer to $V^{(k)}$ as the \emph{eigenvalue spread} to avoid using the term variance for both concepts. The eigenvalue spread is the variance of mixed state $\rho$. 

For the analysis below, we distinguish different cases for the respective postselection criterion.

\subsubsection{No postselection}

In this case, there is no energy bias. Therefore, $V^{(K)}$ stays the same as the energy variance of the input state and does not vary for any $K\in\mathbb{Z}^{+}$, i.e., 
\begin{equation}
    V^{(K)} = \sum_{j}\lambda^2_j\left| c_j \right|^2 - \left(\sum_{j}\lambda_j\left| c_j \right|^2\right)^2.
\end{equation}
\subsubsection{\emph{Weak} postselection} \label{app:ssec:spread:weak}

For the first term, we use the approximation of Proposition~\ref{prop_E_weak} but only change $\lambda_j$ into $\lambda_j^2$, which still makes the approximation (integral of quotient $\rightarrow$ quotient of integrals) valid. And the second term has been introduced in the section discussing the energy biases of \emph{weak} postselection cases. Therefore, here:
\begin{equation}
    V^{(K)} \approx \frac{\left(\left(\frac{4}{3}\right)^K-1\right)\sum_{j}\lambda^2_j\left| c_j \right|^4+\sum_{j}\lambda^2_j\left| c_j \right|^2}{\left(\left(\frac{4}{3}\right)^K-1\right)\sum_{j}\left| c_j \right|^4+1} - \left( \frac{\left(\left(\frac{4}{3}\right)^K-1\right)\sum_{j}\lambda_j\left| c_j \right|^4+\sum_{j}\lambda_j\left| c_j \right|^2}{\left(\left(\frac{4}{3}\right)^K-1\right)\sum_{j}\left| c_j \right|^4+1}  \right)^2, 
\label{eq:weak-int-approx}
\end{equation}
and the bound is:
\begin{equation}
    V^{(\infty)} = \frac{\sum_{j}\lambda_j^2 |c_j|^4}{\sum_{j}|c_j|^4}-\left( \frac{\sum_{j}\lambda_j |c_j|^4}{\sum_{j}|c_j|^4} \right)^2. 
\end{equation}
Similarly, we can also extend this into continuous scenarios. As:
\begin{equation}
    \sum_{j} \lambda^2_j \left| c^{(0)}_j \right|^4 \approx \int_{-\infty}^{\infty}\lambda^2 A^2(\lambda) D(\lambda) d\lambda = \frac{\exp\left(\frac{\mu^2}{\xi^2+\sigma^2}-\frac{\mu^2}{\xi^2+2\sigma^2}\right)\sigma^2(\xi^2+\sigma^2)(\xi^4+4\mu^2\sigma^2+2\xi^2\sigma^2)}{\xi(\xi^2+2\sigma^2)^{\frac{5}{2}}};
\end{equation}
\begin{equation}
    \sum_{j} \lambda^2_j \left| c^{(0)}_j \right|^2 \approx \int_{-\infty}^{\infty}\lambda^2 A(\lambda) D(\lambda) d\lambda = \frac{\sigma^2(\xi^4+\mu^2\sigma^2+\xi^2\sigma^2)}{(\xi^2+\sigma^2)^2},
\end{equation}
then we have:
\begin{equation}
    V^{(K)}=\frac{\sigma^2 }{(\xi^2+\sigma^2)^2 }\cdot \frac{\Gamma+\Xi}{\Theta},
\end{equation}
where:
\begin{equation}
    \Theta = \left( 1 + \frac{\left( -1 + \left( \frac{4}{3} \right)^K \right) e^{\frac{\mu^2 \sigma^2}{\xi^4 + 3 \xi^2 \sigma^2 + 2 \sigma^4}} \left( \xi^2 + \sigma^2 \right)}{\xi \sqrt{\xi^2 + 2 \sigma^2}} \right)^2,
\end{equation}
\begin{equation}
    \Gamma = -\sigma^2 \left( \mu + \frac{2 \left( -1 + \left( \frac{4}{3} \right)^K \right) e^{\frac{\mu^2 \sigma^2}{\xi^4 + 3 \xi^2 \sigma^2 + 2 \sigma^4}} \mu \left( \xi^2 + \sigma^2 \right)^2}{\xi \left( \xi^2 + 2 \sigma^2 \right)^{3/2}} \right)^2,
\end{equation}
\begin{equation}
\begin{split}
    \Xi =& \left( 1 + \frac{ \left( -1 + \left( \frac{4}{3} \right)^K \right) e^{\frac{\mu^2 \sigma^2}{\xi^4 + 3 \xi^2 \sigma^2 + 2 \sigma^4}} \left( \xi^2 + \sigma^2 \right)}{\xi \sqrt{\xi^2 + 2 \sigma^2}} \right) \times \\
    &\left( \xi^4 + (\mu^2 + \xi^2) \sigma^2 + \frac{\left( -1 + \left( \frac{4}{3} \right)^K \right) e^{\frac{\mu^2 \sigma^2}{\xi^4 + 3 \xi^2 \sigma^2 + 2 \sigma^4}} \left( \xi^2 + \sigma^2 \right)^3 \left( \xi^4 + 4 \mu^2 \sigma^2 + 2 \xi^2 \sigma^2 \right)}{\xi \left( \xi^2 + 2 \sigma^2 \right)^{5/2}} \right).
\end{split}
\end{equation}
One can find that $V^{(K)}$ decreases with $K$ and does not depend on whether the input state energy is above or below 0. For the cases when $K\rightarrow\infty$, we have the eigenstate spread decrease:
\begin{equation}
    V^{(0)}-V^{(\infty)}=\xi^2 \frac{1}{\frac{\xi^2}{\sigma^2}+1} - \xi^2 \frac{1}{\frac{\xi^2}{\sigma^2}+2}  =\frac{1}{\left( \frac{\xi^2}{\sigma^2}+1 \right)\left( \frac{\xi^2}{\sigma^2}+2 \right)} \xi^2.
\end{equation}
\subsubsection{\emph{Strong} postselection}

As shown in Proposition~\ref{prop_E_strong}, the approximation:
\begin{equation}
    V^{(K)} \approx \frac{\left(\left(\frac{3}{2}\right)^K-1\right)\sum_{j}\lambda^2_j\left| c_j \right|^4+\sum_{j}\lambda^2_j\left| c_j \right|^2}{\left(\left(\frac{3}{2}\right)^K-1\right)\sum_{j}\left| c_j \right|^4+1} - \left( \frac{\left(\left(\frac{3}{2}\right)^K-1\right)\sum_{j}\lambda_j\left| c_j \right|^4+\sum_{j}\lambda_j\left| c_j \right|^2}{\left(\left(\frac{3}{2}\right)^K-1\right)\sum_{j}\left| c_j \right|^4+1}  \right)^2,
\label{eq:strong-int-approx}
\end{equation}
is conjectured to be true. Moreover, this is also supported by the numerics as in Fig.~\ref{V_approx}. Therefore, we can also analyze \emph{strong} postselection for continuous case and finally get the same bound of the eigenstate spread.

\section{Generalization to $s$ devices}\label{appendix:3-dev}

In the main text, we discuss a possible way to generalize our algorithm to multiple devices, where we use the cycle permutations to replace the SWAP operations. Here we briefly give more analytical details on the extensions.

For $s$-device filter, we write the corresponding state in Fig.~\ref{three-dev} step by step. Assuming the input state is $\sum_{j_1,\cdots,j_s}c_{j_1\cdots j_s}\ket{\phi_{j_1}\cdots j_s}$. Before the controlled-$D$ operator, the state is:

\begin{equation}
    \frac{1}{2^{\frac{s}{2}}s^{\frac{1}{2}}}\sum_{\alpha=0}^{s-1}\ket{\alpha}\otimes\sum_{a_1,\cdots,a_s=0}^{1}\ket{a_1\cdots a_s}\otimes \sum_{j_1,\cdots,j_s}\exp\left( i(a_1\varphi_{j_1}+\cdots+a_s\varphi_{j_s}) \right)c_{j_1\cdots j_s}\ket{\phi_{j_1}\cdots \phi_{j_s}},
\end{equation}

where $\ket{\alpha}$ and $\ket{a_1\cdots a_s}$ denote the top and $s$-device local auxiliary qudit/qubits, respectively. Then we apply a multi-level controlled-$D$ operator $\sum_{q=0}^{s-1}\ket{q}\bra{q}\otimes D^q$, and apply Fourier transform on every auxiliary qudit/qubit. Before the measurement, the state becomes:

\begin{equation}
\begin{split}
    \frac{1}{2^s s}\sum_{\alpha=0}^{s-1}\ket{\alpha}\otimes&\sum_{a_1,\cdots,a_s=0}^{1}\sum_{b_1,\cdots,b_s=0}^{1} (-1)^{a_1b_1+\cdots+a_s b_s}\ket{b_1\cdots b_s} \\
    \otimes&\sum_{j_1,\cdots,j_s}\left[\exp\left( i(a_1\varphi_{j_1}+\cdots+a_s\varphi_{j_s}) \right) + \omega^{\alpha}\exp\left( i(a_1\varphi_{j_2}+\cdots+a_s\varphi_{j_1}) \right) + \cdots + \right. \\
    & \left.\omega^{(s-1)\alpha}\exp\left( i(a_1\varphi_{j_s}+\cdots+a_s\varphi_{j_{s-1}}) \right)  \right] c_{j_1\cdots j_s}\ket{\phi_{j_1}\cdots\phi_{j_s}}.
\end{split}
\end{equation}

One can see that if the top auxiliary $s$-level qudit clicks at $\alpha\neq 0$, the populations for $\ket{\phi_{j}\cdots\phi_{j}}$ become 0. That is why we still need to postselect $\ket{\alpha}=\ket{0}$. 

By measuring all the auxiliary qudit and qubits with outcome $\ket{0}\otimes\ket{b_1\cdots b_s}$, we have the state and the corresponding measurement probability as:

\begin{equation}
\begin{split}
    &\frac{1}{2^s s \sqrt{P_0^{b_1\cdots b_s}}}\sum_{a_1,\cdots,a_s=0}^{1}(-1)^{a_1b_1+\cdots+a_s b_s} \\
    &\sum_{j_1,\cdots,j_s}\left[\exp\left( i(a_1\varphi_{j_1}+\cdots+a_s\varphi_{j_s}) \right) + \omega^{\alpha}\exp\left( i(a_1\varphi_{j_2}+\cdots+a_s\varphi_{j_1}) \right) + \cdots + \right. \\
    & \left.\omega^{(s-1)\alpha}\exp\left( i(a_1\varphi_{j_s}+\cdots+a_s\varphi_{j_{s-1}}) \right)  \right] c_{j_1\cdots j_s}\ket{\phi_{j_1}\cdots\phi_{j_s}},
\end{split}
\label{eq:state-3}
\end{equation}

and,

\begin{equation}
\begin{split}
    P_0^{b_1\cdots b_s}=\frac{1}{s^2}\sum_{j_1,\cdots,j_s}|c_{j_1\cdots j_s}|^2 \left( \sum_{cyc \ \varphi_j}\prod_{l=1}^{s}\left( b_l\sin\frac{\varphi_{j_l}}{2}+(1-b_l)\cos\frac{\varphi_{j_l}}{2} \right) \right)^2,
\end{split}
\end{equation}

respectively. From Eq.~\eqref{eq:state-3}, one can see that due to the symmetry of cyclic permutation, the state for each reduced system will still be identical.

Still, the \emph{strong} postselection case can be analyzed similarly. In this case, $|c_{j_1\cdots j_s}|^2=|c_{j_1}|^2\cdots|c_{j_s}|^2$. Once $\ket{b_1\cdots b_s}=\ket{0\cdots 0}$ or $\ket{b_1\cdots b_s}=\ket{1\cdots 1}$, the dominant population $|c_{j}|^2$ for each corresponding index $j$ will also be multiplied by the same factor $\left(1\pm\cos \varphi_j\right)$. When $\pm\cos \varphi_j>0$, the dominant population will be further amplified with increased probability:

\begin{equation}
    \frac{(1\pm\sum_{j}|c_j|^2\cos \varphi_j)^s}{(1+\sum_{j}|c_j|^2\cos \varphi_j)^s+(1-\sum_{j}|c_j|^2\cos \varphi_j)^s}
\end{equation}

as $\pm\sum_j|c_j|^2\cos \varphi_j$ is more likely to be larger than 0. Note that in this circumstance, this probability increases with number of devices $s$, which means that increasing $s$ will yield faster eignestate convergence, but with more communication costs between parties. Also, one can find that on average, $P_0^{0\cdots0}$ and $P_1^{1\cdots1}$ still exponentially decrease with number of iterations $K$. For \emph{weak} postselection cases, the analytics become more complicated but numerical evidences suggest that 3-device filter still works in the similar way and we expect that it also works for general $s$. However, notice that $D^q$ ($q\neq0$) for non-prime $s$ may not constitute full-cycle permutation, which might affect the convergence rate. This has been briefly mentioned in the main text and also discussed in~\cite{liu2025generalized}. Finally, for \emph{weak} postselection, via calculations, the success probability on average is bounded by:

\begin{equation}
    \sum_{j}|c^{(0)}_j|^{2s}
\end{equation}

\section{Resource cost in the distributed settings}\label{appendix:cost}

\begin{figure}
    \centering
    \includegraphics[width=0.8\linewidth]{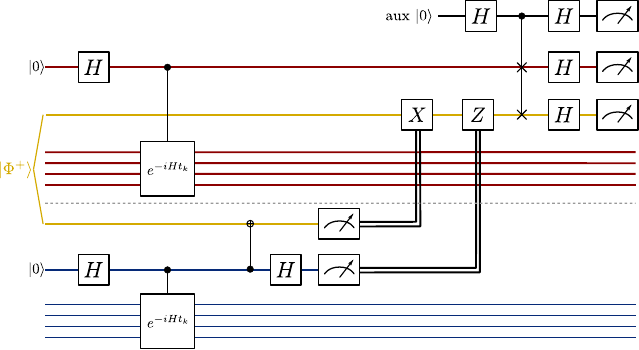}
    \caption{The quantum circuit per iteration to realize the distributed filtering protocol with shared Bell pairs and local operations and classical communication. The qubits above and below the dashed line are held by Alice and Bob, respectively.}
    \label{fig:teleport}
\end{figure}

\begin{figure}
    \centering
    \includegraphics[width=1.0\linewidth]{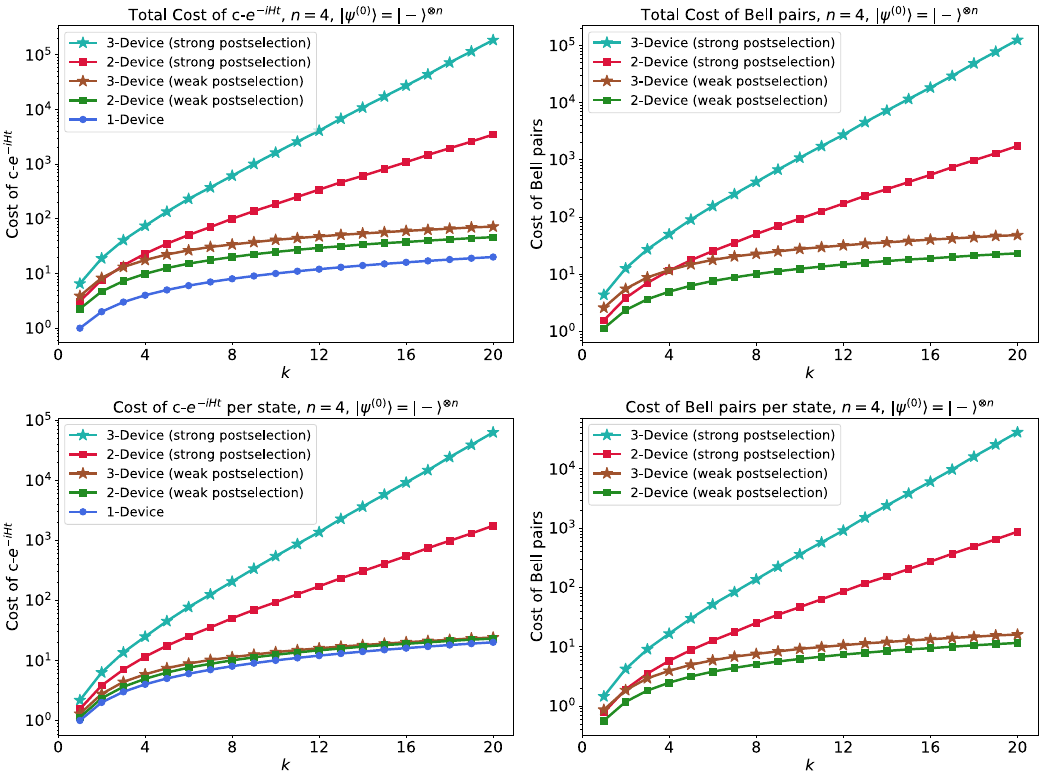}
    \caption{Empirical average number of control time evolutions (control-$e^{-iHt}$) and consumed Bell pairs for the cases in Fig.~\ref{three-dev}. The upper panels depict the total cost, while the lower panels present the cost normalized by the number of output states.}
    \label{fig:cost}
\end{figure}

In this appendix, we discuss the resource costs associated with the distributed setting. In our distributed filtering protocol, non-local operations are required. specifically, a control-SWAP gate per iteration that acts across the single auxiliary qubits on each device, facilitated by shared Bell pairs.
Also, due to the overhead introduced by postselection, additional resource consumption arises, such as an increased number of controlled time evolutions and greater use of Bell pairs.

We begin by emphasizing that the single-device filter requires only one control-$e^{-iHt}$ operation per iteration and no Bell pairs. Moreover, since there is no postselection involved in the single-device protocol, it always succeeds. Consequently, the total number of control-$e^{-iHt}$ operations grows linearly with the number of iterations $k$.

In contrast, for the multi-device filter, each party must apply a local control-$e^{-iHt}$ operation per iteration. Thus, the total number of control time evolutions per iteration equals the number of participating parties. Furthermore, due to the presence of postselection, the protocol may need to restart multiple times, leading to additional resource costs.

In addition to time evolutions, a shared Bell state $\ket{\Phi^+} = \frac{1}{\sqrt{2}}(\ket{00} + \ket{11})$ is required in each iteration to teleport Bob’s auxiliary qubit to Alice via local operations and classical communication. Once the teleportation is complete, Alice can apply the control-SWAP gate locally and determine whether the measurement outcomes satisfy the postselection criterion. If the protocol proceeds to the next iteration, the local auxiliary qubit registers can be reset and reused. The detailed circuit implementation is shown in Fig.~\ref{fig:teleport}.
This teleportation-based approach generalizes straightforwardly to the $s$-device filter, which requires a total of $s - 1$ Bell pairs per iteration to teleport all auxiliary qubits to Alice, allowing her to perform the control permutations locally.

These resource requirements are closely tied to the cumulative success rate of the protocol. Since each iteration carries a probability of failure and possible restart, resource wastage can accumulate. Using Monte Carlo method, we compute the corresponding resource costs, as illustrated in Fig.~\ref{fig:cost}.
In the \emph{weak} postselection regime, the cumulative success rate plateaus after a few iterations, implying that the protocol will almost always succeed beyond a certain $k$. As a result, the total resource cost ultimately scales linearly with $k$.
However, in the \emph{strong} postselection regime, the cumulative success rate decays exponentially, meaning that the protocol continues to have a non-negligible failure probability at each iteration. Consequently, the total resource cost grows exponentially with $k$.
Moreover, we illustrate the cost normalized by the number of output states as shown in the lower panels of Fig.~\ref{fig:cost}), since the distributed protocols can yield multiple identical eigenstates. 
They reveal the per-state resource requirements, enabling a fairer comparison among different settings.

We also note that when the controlled-SWAP operation is affected by noise, such as from an imperfect shared Bell pair or a noisy teleportation process, the protocol still functions, though with reduced performance depending on the noise level (cf. Fig. 12 in~\cite{schiffer2025quantum}).

\section{Additional numerical experiments~\label{appx-complete}}

In the final subsection of this Appendix, we present additional numerical experiments on several product state instances of the form $\ket{p(\theta)}=(\cos\theta\ket{0}+\sin\theta\ket{1})^{\otimes n}$, where we choose $\theta\in\{-\pi/4,\pi/4\}$ and $n\in\{4,5,6\}$.
We show one standard deviation ($\pm\sigma_{\mathrm{data}}$) of the data around the mean value in Fig.~\ref{2-dev-EV-appendix}, \ref{2-dev-rounds-appendix},~\ref{3-dev-appendix} and \ref{3-dev-rounds-appendix}. 
Different from the numerics in the main text, where the average values are computed as accurately as possible, here we fix the number of repetition times to $10^6$ for Fig.~\ref{2-dev-EV-appendix} and~\ref{2-dev-rounds-appendix} and $10^5$ for Fig.~\ref{3-dev-appendix} and~\ref{3-dev-rounds-appendix}. 
For each iteration $k$, we compute the average value from the data that meet the postselection criteria, and discard the others. 
Then, especially for \emph{strong} postselection, the data is computed from smaller ensembles,  leading to noticeable statistical fluctuations at large $k$.
Also, we note that due to the randomness inherent in the filtering process, the error $|\sigma_{\mathrm{data}}|$ can occasionally be large. 
To clearly illustrate the decrease in average variance (not the eigenvalue spreading) across different iterations, the y-axis is plotted on a logarithmic scale. 
As a result, the estimates for the lower error bars can extend below zero, causing the shaded error regions to cover all the areas below the data points on the plots.

\begin{figure}
    \centering
    \includegraphics[width=0.8\linewidth]{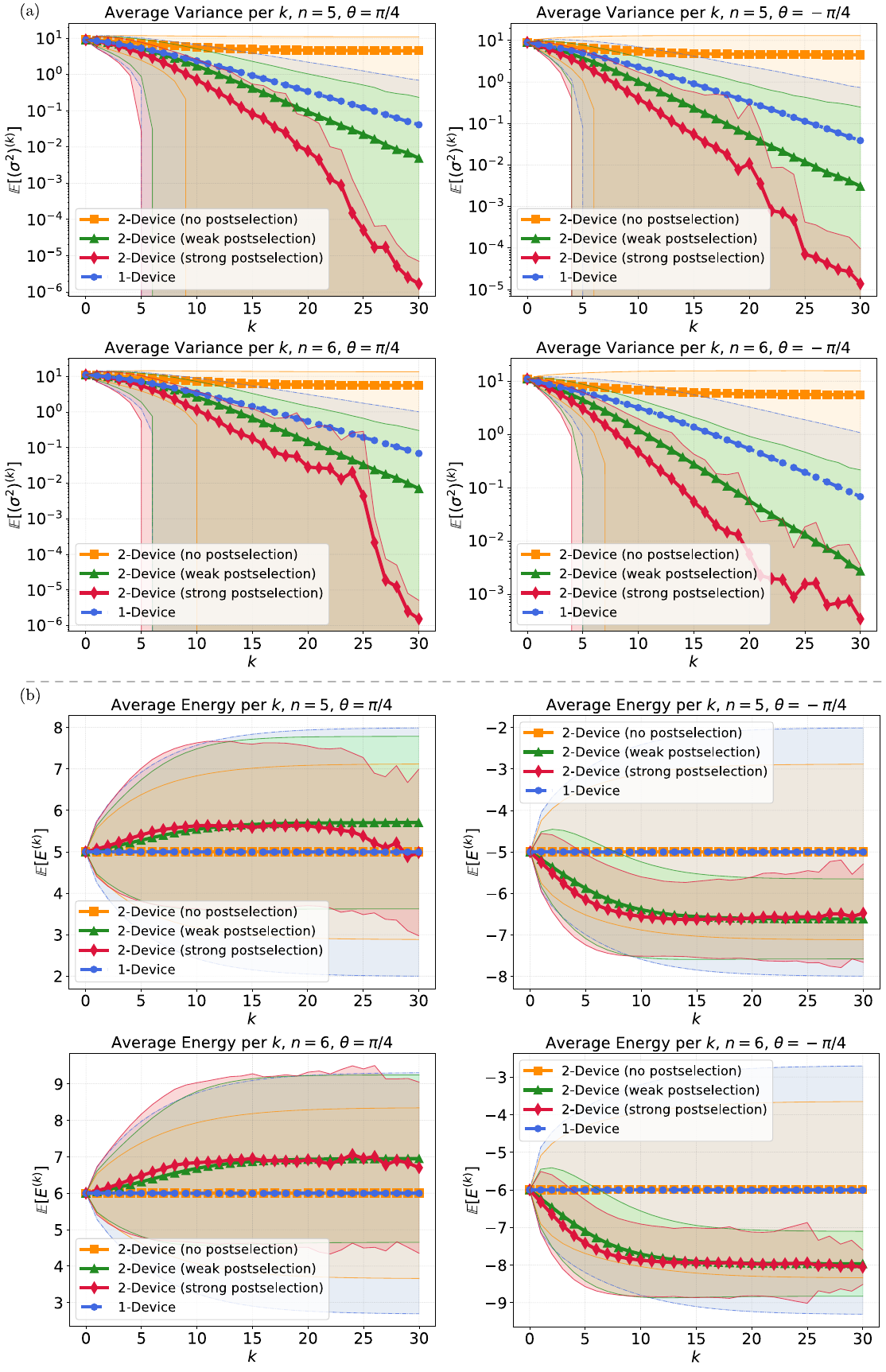}
    \caption{Average variance and energy of the pure state ensemble per iteration $k$. Four cases are shown including the instances in the main text: $n=5$ or $6$; $\theta=\pm\pi/4$. The algorithm is repeated $10^6$ times, with each repetition lasting up to 30 iterations. The fluctuations in the average variance for the \emph{strong} postselection case at large $k$ are attributed to the limited data available from the exponentially decreasing number of surviving rounds. }
    \label{2-dev-EV-appendix}
\end{figure}

\begin{figure}
    \centering
    \includegraphics[width=0.8\linewidth]{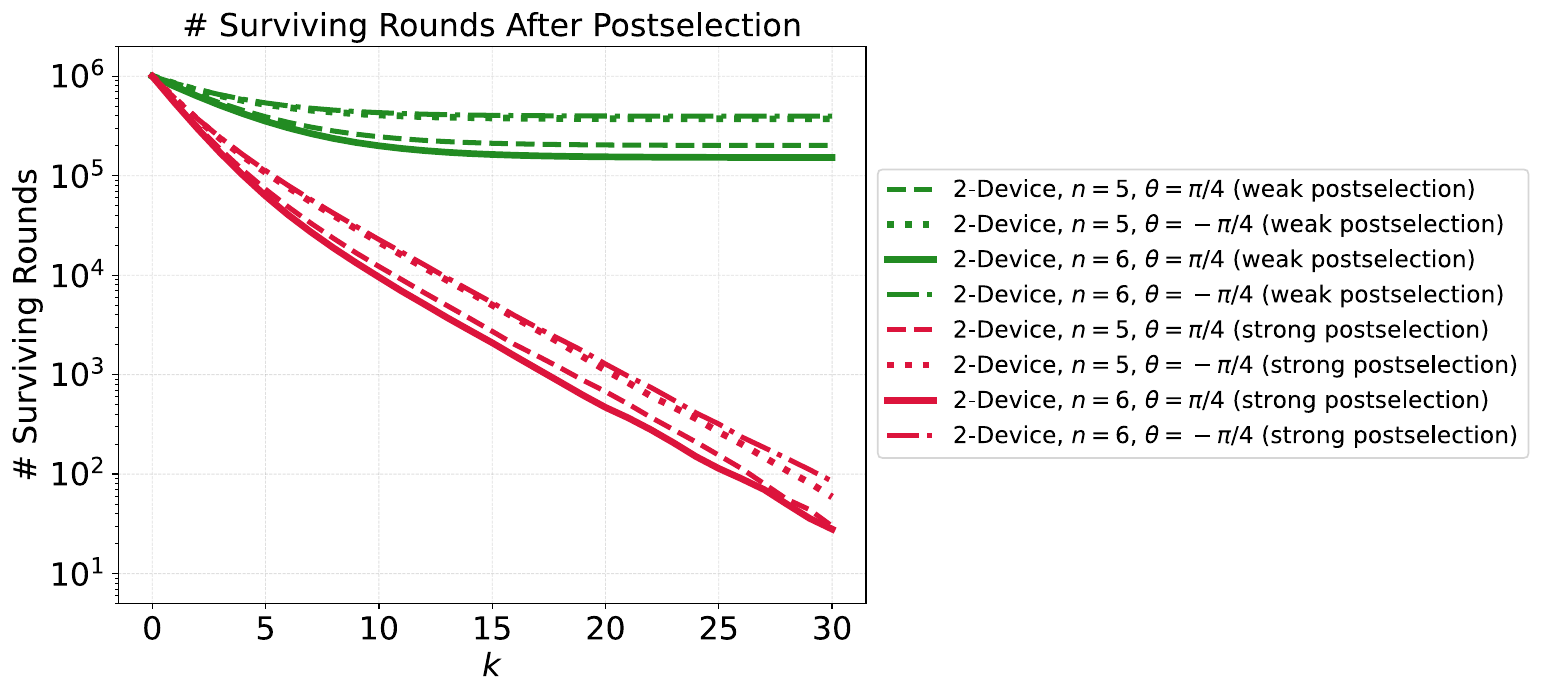}
    \caption{The surviving rounds of repetitions after postselection at each iteration, for all four numerical instances in Fig.~\ref{2-dev-EV-appendix}.}
    \label{2-dev-rounds-appendix}
\end{figure}

\begin{figure}
    \centering
    \includegraphics[width=1.0\linewidth]{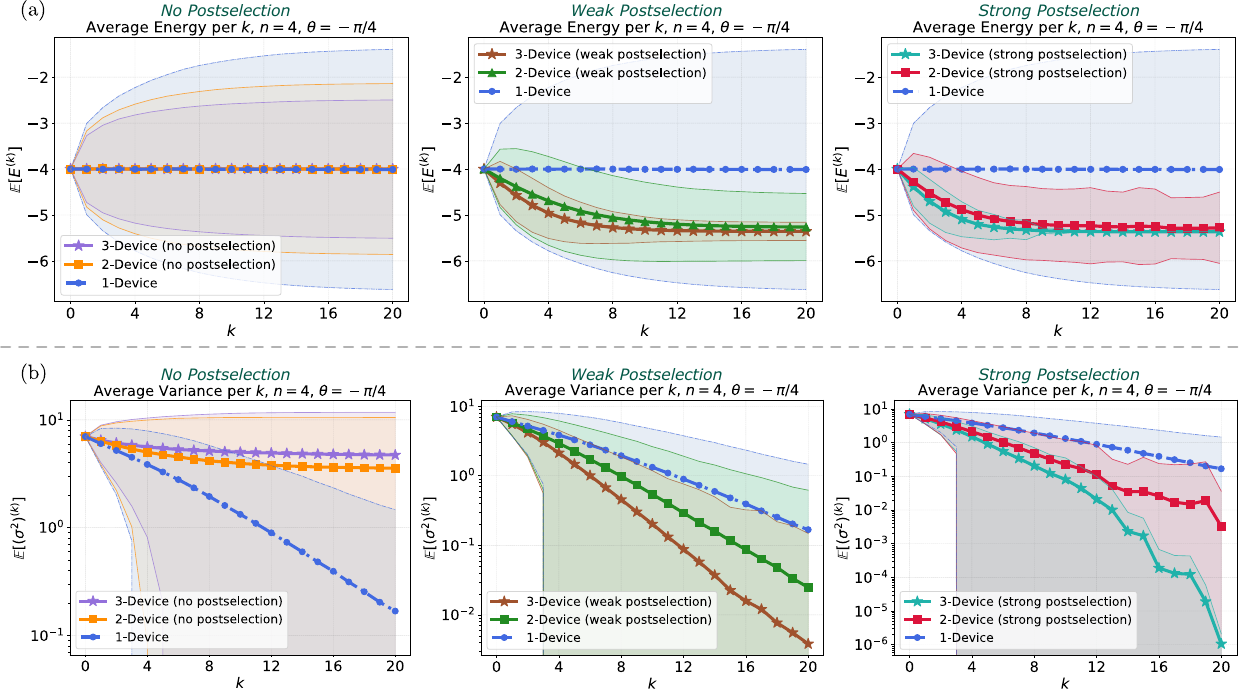}
    \caption{The numerical experiment results with the same inputs in Fig.~\ref{three-dev} (comparing single-, two- and three-device algorithms with $n=4$, $\theta=-\pi/4$) but with fixed repetition numbers ($10^5$) of the algorithm at the beginning.}
    \label{3-dev-appendix}
\end{figure}

\begin{figure}
    \centering
    \includegraphics[width=0.7\linewidth]{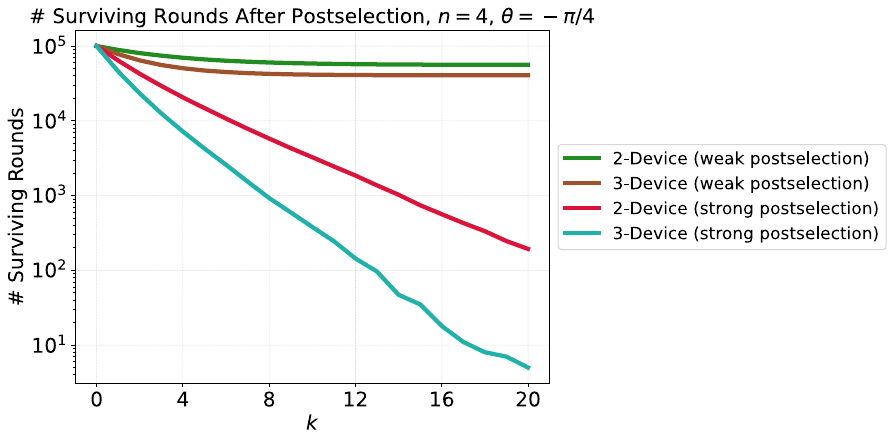}
    \caption{The surviving rounds of repetitions after postselection at each iteration for the numerical instances in Fig.~\ref{3-dev-appendix}.}
    \label{3-dev-rounds-appendix}
\end{figure}

\end{document}